\newtheorem{proposition}{Proposition}[section]
\newtheorem{lemma}[proposition]{Lemma}
\newtheorem{definition}[proposition]{Definition}
\newtheorem{theorem}[proposition]{Theorem}
\newtheorem{example}[proposition]{Example}
\newcommand{\chc}{Chc}
\newcommand{\ck}{Ck}
\newcommand{\rb}{Rb}
\newcommand{\snd}{Snd}
\newcommand{\cinferrule}[3][]{
  \mprset{fraction={===},
  fractionaboveskip=0.2ex,
  fractionbelowskip=0.4ex}
  \inferrule[#1]{#2}{#3}
}
\newcommand{\tout}[3]{#1!#2(#3)}
\newcommand{\rulename}[1]{\text{\small[\textsc{#1}]}}
\newdimen\proofrulebreadth \proofrulebreadth=.05em
\newdimen\proofdotseparation \proofdotseparation=1.25ex
\newdimen\proofrulebaseline \proofrulebaseline=2ex
\let\then\relax
\def\hfi{\hskip0pt plus.0001fil}
\mathchardef\squigto="3A3B
\newif\ifinsideprooftree\insideprooftreefalse
\newif\ifonleftofproofrule\onleftofproofrulefalse
\newif\ifproofdots\proofdotsfalse
\newif\ifdoubleproof\doubleprooffalse
\let\wereinproofbit\relax
\newdimen\shortenproofleft
\newdimen\shortenproofright
\newdimen\proofbelowshift
\newbox\proofabove
\newbox\proofbelow
\newbox\proofrulename
\def\shiftproofbelow{\let\next\relax\afterassignment\setshiftproofbelow\dimen0 }
\def\shiftproofbelowneg{\def\next{\multiply\dimen0 by-1 }%
\afterassignment\setshiftproofbelow\dimen0 }
\def\setshiftproofbelow{\next\proofbelowshift=\dimen0 }
\def\setproofrulebreadth{\proofrulebreadth}
\def\prooftree{
%
\ifnum  \lastpenalty=1
\then   \unpenalty
\else   \onleftofproofrulefalse
\fi
%
\ifonleftofproofrule
\else   \ifinsideprooftree
        \then   \hskip.5em plus1fil
        \fi
\fi
%
\bgroup
\setbox\proofbelow=\hbox{}\setbox\proofrulename=\hbox{}%
\let\justifies\proofover\let\leadsto\proofoverdots\let\Justifies\proofoverdbl
\let\using\proofusing\let\[\prooftree
\ifinsideprooftree\let\]\endprooftree\fi
\proofdotsfalse\doubleprooffalse
\let\thickness\setproofrulebreadth
\let\shiftright\shiftproofbelow \let\shift\shiftproofbelow
\let\shiftleft\shiftproofbelowneg
\let\ifwasinsideprooftree\ifinsideprooftree
\insideprooftreetrue
%
\setbox\proofabove=\hbox\bgroup$\displaystyle 
\let\wereinproofbit\prooftree
%
\shortenproofleft=0pt \shortenproofright=0pt \proofbelowshift=0pt
%
\onleftofproofruletrue\penalty1
}
\def\eproofbit{
%
\ifx    \wereinproofbit\prooftree
\then   \ifcase \lastpenalty
        \then   \shortenproofright=0pt  
        \or     \unpenalty\hfil         
        \or     \unpenalty\unskip       
        \else   \shortenproofright=0pt  
        \fi
\fi
%
\global\dimen0=\shortenproofleft
\global\dimen1=\shortenproofright
\global\dimen2=\proofrulebreadth
\global\dimen3=\proofbelowshift
\global\dimen4=\proofdotseparation
\global\count255=\proofdotnumber
%
$\egroup  
%
\shortenproofleft=\dimen0
\shortenproofright=\dimen1
\proofrulebreadth=\dimen2
\proofbelowshift=\dimen3
\proofdotseparation=\dimen4
\proofdotnumber=\count255
}
\def\proofover{
\eproofbit 
\setbox\proofbelow=\hbox\bgroup 
\let\wereinproofbit\proofover
$\displaystyle
}%
\def\proofoverdbl{
\eproofbit 
\doubleprooftrue
\setbox\proofbelow=\hbox\bgroup 
\let\wereinproofbit\proofoverdbl
$\displaystyle
}%
\def\proofoverdots{
\eproofbit 
\proofdotstrue
\setbox\proofbelow=\hbox\bgroup 
\let\wereinproofbit\proofoverdots
$\displaystyle
}%
\def\proofusing{
\eproofbit 
\setbox\proofrulename=\hbox\bgroup 
\let\wereinproofbit\proofusing
\kern0.3em$
}
\def\endprooftree{
\eproofbit 
  \dimen5 =0pt
%
\dimen0=\wd\proofabove \advance\dimen0-\shortenproofleft
\advance\dimen0-\shortenproofright
%
\dimen1=.5\dimen0 \advance\dimen1-.5\wd\proofbelow
\dimen4=\dimen1
\advance\dimen1\proofbelowshift \advance\dimen4-\proofbelowshift
%
\ifdim  \dimen1<0pt
\then   \advance\shortenproofleft\dimen1
        \advance\dimen0-\dimen1
        \dimen1=0pt
        \ifdim  \shortenproofleft<0pt
        \then   \setbox\proofabove=\hbox{%
                        \kern-\shortenproofleft\unhbox\proofabove}%
                \shortenproofleft=0pt
        \fi
\fi
%
\ifdim  \dimen4<0pt
\then   \advance\shortenproofright\dimen4
        \advance\dimen0-\dimen4
        \dimen4=0pt
\fi
%
\ifdim  \shortenproofright<\wd\proofrulename
\then   \shortenproofright=\wd\proofrulename
\fi
%
\dimen2=\shortenproofleft \advance\dimen2 by\dimen1
\dimen3=\shortenproofright\advance\dimen3 by\dimen4
%
\ifproofdots
\then
        \dimen6=\shortenproofleft \advance\dimen6 .5\dimen0
        \setbox1=\vbox to\proofdotseparation{\vss\hbox{$\cdot$}\vss}%
        \setbox0=\hbox{%
                \advance\dimen6-.5\wd1
                \kern\dimen6
                $\vcenter to\proofdotnumber\proofdotseparation
                        {\leaders\box1\vfill}$%
                \unhbox\proofrulename}%
\else   \dimen6=\fontdimen22\the\textfont2 
        \dimen7=\dimen6
        \advance\dimen6by.5\proofrulebreadth
        \advance\dimen7by-.5\proofrulebreadth
        \setbox0=\hbox{%
                \kern\shortenproofleft
                \ifdoubleproof
                \then   \hbox to\dimen0{%
                        $\mathsurround0pt\mathord=\mkern-6mu%
                        \cleaders\hbox{$\mkern-2mu=\mkern-2mu$}\hfill
                        \mkern-6mu\mathord=$}%
                \else   \vrule height\dimen6 depth-\dimen7 width\dimen0
                \fi
                \unhbox\proofrulename}%
        \ht0=\dimen6 \dp0=-\dimen7
\fi
%
\let\doll\relax
\ifwasinsideprooftree
\then   \let\VBOX\vbox
\else   \ifmmode\else$\let\doll=$\fi
        \let\VBOX\vcenter
\fi
\VBOX   {\baselineskip\proofrulebaseline \lineskip.2ex
        \expandafter\lineskiplimit\ifproofdots0ex\else-0.6ex\fi
        \hbox   spread\dimen5   {\hfi\unhbox\proofabove\hfi}%
        \hbox{\box0}%
        \hbox   {\kern\dimen2 \box\proofbelow}}\doll%
%
\global\dimen2=\dimen2
\global\dimen3=\dimen3
\egroup 
\ifonleftofproofrule
\then   \shortenproofleft=\dimen2
\fi
\shortenproofright=\dimen3
%
\onleftofproofrulefalse
\ifinsideprooftree
\then   \hskip.5em plus 1fil \penalty2
\fi
}
\newcommand{\myrule}[3]{\begin{prooftree}
 #1 \justifies   #2   \using{\rln{#3}} \end{prooftree}}
 \newcommand{\myformula}[1]{\\\centerline{#1}}
  \newenvironment{myitemize}
               {\begin{itemize}\vspace{-2pt}\topsep0pt\parskip0pt\partopsep0pt\itemsep0pt\leftmargin-100pt\itemsep-1pt\labelwidth0pt\labelsep3pt}
               {\vspace{-1pt}\end{itemize}}
 \newenvironment{myenumerate}[1]
               {\begin{enumerate}{#1}\vspace{-2pt}\topsep0pt\parskip0pt\partopsep0pt\itemsep0pt\leftmargin10pt\itemsep-2pt\labelwidth0pt\labelsep3pt}
               {\vspace{-2pt}\end{enumerate}}
               \newenvironment{mylemma}
               {\vspace{-1pt}
               \begin{lemma}\vspace{-1pt}
               }
               {\end{lemma}}
\newcommand{\A}{{\mathcal A}}
\newcommand{\pc}{~|~}
\newcommand{\set}[1]{\{#1\}}
\newcommand{\labelx}[1]{\label{#1}}
\newcommand{\s}{\epsilon}
\newcommand{\Gvti}[4]{\ensuremath{#1\to\pset:\{#2_i({#3}_i). #4_i\}_{i \in I}}}
\newcommand{\Gvtj}[4]{\ensuremath{#1\to\pset:\{#2_j({#3}_j). #4_j\}_{j \in J}}} 
\newcommand{\Gvth}[4]{\ensuremath{#1\to\pset:\{#2_h({#3}_h). #4_h\}_{h \in H}}}   
\newcommand{\GvtiLong}[3]{\ensuremath{#1\to#2:\{\ #3\ \}}}
\newcommand{\pp}{{\sf p}}
\newcommand{\q}{{\sf q}}
\newcommand{\pr}{{\sf r}}
\newcommand{\G}{\ensuremath{{\sf G}}}
\newcommand{\ST}{\ensuremath{S}}
\newcommand{\T}{\ensuremath{\mathsf{T}}}
\renewcommand{\P}{\ensuremath{P}}
\newcommand{\pset}{\ensuremath{\q}}
\newcommand{\ty}{\textbf{t}}
\newcommand{\GG}{\Upsilon}
\newcommand{\End}{\kf{end}}
\newcommand{\next}{\kf{next}}
\newcommand{\inact}{\ensuremath{\mathbf{0}}}
\newcommand{\kf}[1]{\ensuremath{\mathsf{#1}\xspace}}
\newcommand{\e}{\kf{e}}
\newcommand{\x}{x}
\newcommand{\val}{v}
\newcommand{\eval}[2]{#1 \downarrow #2}
\newcommand{\recvar}{\ensuremath{X}}
\newcommand{\sep}{\ensuremath{~\mathbf{|\!\!|}~ }}
\newcommand{\stackred}[1]{\xrightarrow{#1}}
\newcommand{\stackredstar}[1]{\xrightarrow{#1}^{\raisebox{-4pt}{$\ast$}}}
\newcommand{\proj}[2]{#1 \!  \upharpoonright  \! #2\,}
\newcommand{\Q}{\ensuremath{Q}}
\newcommand{\pt}{\kf{pt}}
\newcommand{\rln}[1]{{\textsc{#1}}}
\DeclareMathAlphabet{\mathpzc}{OT1}{pzc}{m}{it}
\definecolor{ocre}{rgb}{.92,.86,.3}%
\definecolor{verde}{rgb}{0.50,0.70,0.4}
\definecolor{beppeblue}{rgb}{0,0,0.4}
\definecolor{lucagreen}{rgb}{0,0.3,0}
\definecolor{lightgreen}{rgb}{0.79,0.86,.79}
\definecolor{lesslightgreen}{rgb}{0.39,.9,.39}
\definecolor{mydarkgreen}{rgb}{0.10,0.43,0}
\definecolor{darkred}{rgb}{0.67,0.27,0.39}
\definecolor{ultralightgray}{gray}{0.85}
\definecolor{midgray}{gray}{0.6}
\definecolor{siena}{rgb}{0.58,0.23,0.34}
\definecolor{mgreen}{rgb}{0,0.5,0}
\definecolor{lucared}{rgb}{0.8,0,0}
\newcommand{\pskip}[1]{}
\newcommand{\sendMsg}[4]{#1!#2(#3).#4}
\newcommand{\rcvMsg}[4]{#1?#2(#3).#4}
\newcommand{\sendPrcS}[2]{#1!#2}
\newcommand{\rcvPrcS}[2]{#1?#2}
\newcommand{\sendMsgS}[3]{#1!#2.#3}
\newcommand{\rcvMsgS}[3]{#1?#2.#3}
\newcommand{\GlComm}[2]{#1\to#2:}
\newcommand{\CkGlComm}[3]{{}_{\ckP{#3}}#1\to#2:}
\newcommand{\ckP}[1]{\blacktriangle_{#1}}
\newcommand{\CkOne}[2]{\ensuremath{{}_{\ckP{#2}}\!#1}}
\newcommand{\CkTwo}[2]{\ensuremath{{}_{\ckP{#2}}\!\!#1}}
\newcommand{\CkZero}[2]{\ensuremath{{}_{\ckP{#2}}#1}}
\newcommand{\CkGl}[2]{\CkZero{#1}{#2}}
\newcommand{\interTyI}[3]{\ensuremath{\bigwedge_{#1 \in #2}#3}}
\newcommand{\unionTyI}[3]{\ensuremath{\bigvee_{#1 \in #2}#3}}
\newcommand{\unionTy}[1]{\bigvee\{#1\}}
\newcommand{\interTy}[1]{\bigwedge\{#1\}}
\newcommand{\interSymD}{\bigwedge\!\!\!\!\bigwedge}
\newcommand{\interRcv}[1]{\interSymD #1}
\newcommand{\intMsg}[6]{\interTyI{#1}{#2}{\rcvMsg{#3}{#4_{#1}}{#5_{#1}}{#6_{#1}}}}
\newcommand{\unMsg}[6]{\unionTyI{#1}{#2}{\sendMsg{#3}{#4_{#1}}{#5_{#1}}{#6_{#1}}}}
\newcommand{\intMsgLong}[6]{\interTyI{#1}{#2}{\rcvMsg{#3}{#4_{#1}}{#5_{#1}}{#6}}}
\newcommand{\unMsgLong}[6]{\unionTyI{#1}{#2}{\sendMsg{#3}{#4_{#1}}{#5_{#1}}{#6}}}
\newcommand{\intChoice}[6]{\bigoplus_{#1\in #2}\sendMsg{#3}{#4_{#1}}{#5_{#1}}{#6_{#1}}}
\newcommand{\extChoice}[6]{\sum_{#1\in #2}\rcvMsg{#3}{#4_{#1}}{#5_{#1}}{#6_{#1}}}
\newcommand{\intChPr}[1]{\intChSy\{#1\}}
\newcommand{\extChPr}[1]{\extChSy\{#1\}}
\newcommand{\intChSy}{\bigoplus}
\newcommand{\extChSy}{\sum}
\newcommand{\confEl}[2]{#1\mathrel{\triangleleft}#2}
\newcommand{\BP}{R}
\newcommand{\noBP}{\epsilon}
\newcommand{\PC}{\|}
\newcommand{\pairPr}[2]{#1\prec #2}
\newcommand{\pairGT}[2]{#1\prec #2}
\newcommand{\pairT}[2]{#1\prec #2}
\newcommand{\myruleOneLine}[2]{#1\quad\rln{#2}}
\newcommand{\PrConf}{\mathbbmss{C}}
\newcommand{\Nw}{\mathbbmss{M}}
\newcommand{\Nt}{\mathbbmss{N}}
\newcommand{\subst}[3]{#1\{#3/#2\}}
\newcommand{\silent}{\tau}
\newcommand{\dere}[3]{#1\vdash #2:#3}
\newcommand{\derP}[3]{#1 \vdash #2 :#3}
\newcommand{\derN}[2]{\vdash #1 :#2}
\newcommand{\derS}[2]{\vdash #1 :#2}
\newcommand{\derNok}[1]{\vdash #1 ~\checkmark}
\newcommand{\subt}{\leqslant}
\newcommand{\Push}[2]{#1\cdot#2}
\newcommand{\PushT}[2]{#1\cdot#2}
\newcommand{\len}[1]{|#1|}
\newcommand{\RT}{\rho}
\newcommand{\redG}{\Longrightarrow}
\newcommand{\redGstar}{\Longrightarrow^{\raisebox{-1pt}{$\ast$}}}
\newcommand{\agr}[3]{#1\ltimes_#2#3}
\newenvironment{proof}[1][Proof]{\begin{trivlist}
\item[\hskip \labelsep {\bfseries #1}]}{\end{trivlist}}
\newcommand{\tbool}{\mathtt{Bool}}
\newcommand{\tint}{\mathtt{Int}}
\newcommand{\query}{\ensuremath{\mathtt{query}}}
\newcommand{\info}{\ensuremath{\mathtt{info}}}
\newcommand{\available}{\ensuremath{\mathtt{available}}}
\newcommand{\notAvailable}{\ensuremath{\mathtt{notAvailable}}}
\newcommand{\reserve}{\ensuremath{\mathtt{reserve}}}
\newcommand{\discard}{\ensuremath{\mathtt{discard}}}
\newcommand{\queryM}{\ensuremath{\mathtt{qr}}}
\newcommand{\infoM}{\ensuremath{\mathtt{in}}}
\newcommand{\availableM}{\ensuremath{\mathtt{av}}}
\newcommand{\notAvailableM}{\ensuremath{\mathtt{nAv}}}
\newcommand{\reserveM}{\ensuremath{\mathtt{rs}}}
\newcommand{\discardM}{\ensuremath{\mathtt{ds}}}
\newcommand{\stringT}{\ensuremath{\kf{Str}}}
\newcommand{\Traveler}{\ensuremath{\mathtt{Tr}}}
\newcommand{\Hotel}{\ensuremath{\mathtt{Ht}}}
\newcommand{\Airline}{\ensuremath{\mathtt{Al}}}
\newcommand{\mysubsection}[1]
{\paragraph{#1}}
\newcommand{\mysection}[1]{\vspace{-10.5pt}\section{#1}\vspace{-6.5pt}}
\newcommand{\mysubsectionA}[1]{\paragraph{#1}}
\title{Reversible Multiparty Sessions with Checkpoints\thanks{Partially supported by EU H2020-644235 Rephrase project, EU H2020-644298 HyVar project, ICT COST Actions IC1201 BETTY, IC1402 ARVI and Ateneo/CSP project RunVar.} {\footnotesize }
\vspace{-15pt}
}
\author{ 
Mariangiola Dezani-Ciancaglini
\footnote{Dipartimento di Informatica,
Universit\`a di Torino,
dezani@di.unito.it}
\and
Paola Giannini\footnote{Computer Science Institute, DiSIT,
Universit\`a del
Piemonte Orientale,
paola.giannini@uniupo.it}
}
\begin{document}

\maketitle

\vspace{-10pt}
\begin{abstract}
Reversible interactions model different scenarios, like biochemical systems and human as well as automatic negotiations. 
We abstract interactions via multiparty sessions enriched with named checkpoints. 
Computations can either go forward or roll back to some checkpoints, where possibly different choices may be taken. 
In this way communications can be undone and different conversations may be tried. 
Interactions are typed with global types, which control also rollbacks. 
Typeability of session participants in agreement with global types ensures session fidelity and progress of reversible communications.
\end{abstract}
\vspace{-10pt}

\mysection{Introduction}

{\em Reversibility} is an essential feature in the construction of reliable systems.
If a system reaches an undesired state, some actions may be undone and the computation
may be restarted from a consistent state. Several studies,
see~\cite{DK04,LMSS11,LMS10,PU07}, have investigated the theoretical foundations of reversible computations. The relevance of these papers for our work is briefly discussed at the beginning of Section~\ref{rwc}.

\begin{wrapfigure}{r}{0.35\textwidth}
\hspace*{\fill}
\includegraphics[width=0.35\textwidth]{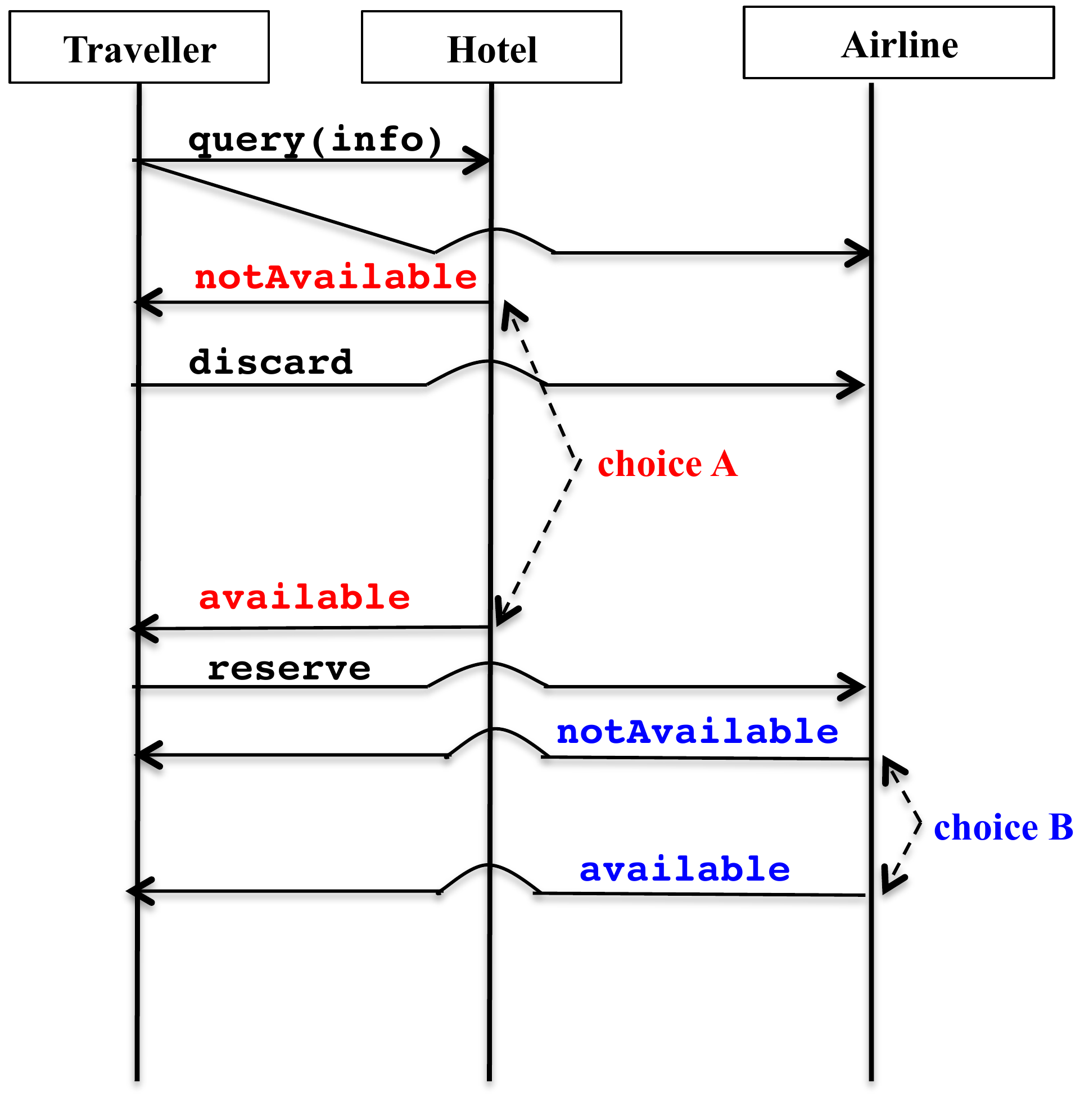}
\hspace*{\fill}
\caption{ Traveller planning a trip.}\label{f:exampleUML} \label{fig:exampleUML}
\end{wrapfigure}
\indent
Our focus is in the context of structured communications, more precisely
multiparty
sessions, see~\cite{HVK98,CHY08}. The choreography of communications is described by global types,
which are projected on the participants to get their interaction patterns~\cite{CDPY16,DGJPY16}.
In order to fix the points of the computations we may revert to, we add checkpoints to
the syntax of global types. In contrast to previous work, see~\cite{BDLdL15,TY15,TY16}, our checkpoints are named and
rollbacks specify the name of the checkpoints to which we revert.

We illustrate our approach by discussing an example. Consider the UML sequence diagram of Figure~\ref{fig:exampleUML}. In this example, there are three interacting participants, 
named Traveller (\Traveler), Hotel (\Hotel)
and Airline (\Airline), that establish a session. \Traveler, planning a trip, sends a message labelled 
\query,  to \Hotel\ and
to \Airline\ with the details of his journey (abstracted as a string \info). 
\Hotel\ answers to \Traveler\ with either the message \notAvailable\
or \available. If \Hotel\ answers \notAvailable, \Traveler\ sends to \Airline\ a message \discard\ saying
to ignore the previous \query. If \Hotel\ answers \available, then \Traveler\ send a message to \Airline\
asking to \reserve\ flights for the journey, to which \Airline\ answers to \Traveler\ with either 
the message \notAvailable\ or \available.

The choice made by \Hotel, named $A$, and the one made by \Airline, named $B$, are checkpointed choices.
This means that the computation could revert to one of these points of the interaction and the 
given participant could make a different choice. Rolling back to a choice point involves all the participants which crossed this choice point. Moreover, rollback may
happen only when all the participants that have this choice point in their future
have crossed it. Typing ensures that not involved participants are terminated.

Assume that \Hotel, after sending the message \available\ to
\Traveler, discovers that instead for the required dates it is fully booked, and wants to roll back to the choice
point $A$. 
Before rolling back, it has to make sure that \Traveler\ has sent the \reserve\ message to
\Airline\ and \Airline\ has received the message. So that, they can all go back to the interaction
before $A$. Otherwise, if \Traveler\ has not sent the \reserve\ message, and \Hotel\ goes back to
sending the message \notAvailable\ to \Traveler\ which is not expecting a message from \Hotel, there could be an
unpleasant misunderstanding, which is formally represented by a ``stuck'' computation.

{\bf Outline}
Section~\ref{c} introduces our calculus, completed by the type system of Section~\ref{ts}. Subject reduction, session fidelity and progress are proved in Section~\ref{mp}.  Section~\ref{rwc} discusses related papers and future work. 


\mysection{Calculus}\label{c}
In this section we introduce the syntax and the semantics of multiparty sessions with {\em named checkpoints}. 
\mysubsection{Syntax}\labelx{syntaxSection}
A \emph{multiparty session} is a series of 
interactions between a fixed number of
participants, possibly with branching and recursion~\cite{CHY08}.

We use the following base sets:  \emph{values}, ranged over by $\val,\val',\ldots$;
 \emph{expressions}, ranged over by $\e,\e',\ldots$;
\emph{expression variables}, ranged over by
$x,y,z\dots$; \emph{labels}, ranged over by $\ell,\ell',\dots$; {\em checkpoint names}, ranged over by $A,B,\ldots$; 
 \emph{session participants}, ranged over by $\pp,\q,\ldots$;
\emph{process variables}, ranged over by $X,Y,\dots$; 
\emph{processes}, ranged over by $P,Q,\dots$; {\em configurations}, ranged over by $\PrConf, \PrConf', \ldots$; \emph{multiparty sessions}, ranged over by $\Nw,\Nw',\dots$; \emph{networks}, ranged over by $\Nt,\Nt',\dots$.

Our processes are obtained from the processes of~\cite{DGJPY16}  by adding named checkpoints before external and internal choices. 

\begin{definition}\labelx{procdef}
 \emph{Processes} are defined by:
{\small 
\myformula{$
\begin{array}{ll}
\P  ::=  & \extChoice{i}{I}{\pp}{\ell}{x}{\P}~~\sep~~\intChoice{i}{I}{\pp}{\ell}{\e}{\P} 
~~\sep~~ \CkOne{\extChoice{j}{J}{\pp}{\ell}{x}{\P}}{A}
~~\sep~~ \CkOne{\intChoice{j}{J}{\pp}{\ell}{\e}{\P}}{A}
~~\sep~~   \mu\recvar.\P ~~\sep~~  \recvar ~~\sep~~\inact
\end{array}
$}
}
\end{definition} 
We say that $\CkOne\P A$ is a {\em process checkpointed by $A$}.

The input process $\extChoice{i}{I}{\pp}{\ell}{x}{\P}$ waits for a value and  a label $\ell_i$ with $i\in I$ from participant $\pp$ and
the output process $\intChoice{i}{I}{\q}{\ell}{\e}{\P}$ sends the value of an expression $\e_i$ and a label $\ell_i$ with $i\in I$  to participant $\q$. Checkpointed input and output processes behave in a similar way, except that, when sending/reading a message the checkpointed process is memorised, and can be
executed again after a rollback. As usual, in writing processes
we omit trailing $\inact$'s, and empty parameters.

\begin{example}\label{ex:syntax}
Consider the example of Figure~\ref{fig:exampleUML}. The processes associated with the participants \Traveler,
\Hotel, and \Airline\ are defined as follows (we abbreviate the labels of messages with their first two consonants):

\hspace{-25pt}
{\small 
\begin{tabular}{lll}
$\P_{\Traveler}$&=&$\sendMsg{\Hotel}{\queryM}{\infoM}{\sendMsg{\Airline}{\queryM}{\infoM}
{\P'_{\Traveler}}}\ $
where $\ \P'_{\Traveler}=\CkOne{\extChPr{
                   \rcvMsgS{\Hotel}{\notAvailableM}{\sendPrcS{\Airline}{\discardM} }
                    \ ,\ 
 \rcvMsgS{\Hotel}{\availableM}{\sendMsgS{\Airline}{\reserveM}{
                    \CkOne{ 
                    \extChPr{
                    \rcvPrcS{\Airline}{\notAvailableM}
                    \ ,\ 
                    \rcvPrcS{\Airline}{\availableM}}
                     }
                    {B}
                    }}
        }
   }{A}$\\
$\P_{\Hotel}$&=&$\rcvMsg{\Traveler}{\queryM}{x}{\CkTwo{\intChPr{
\sendPrcS{\Traveler}{\notAvailableM}
\ ,\  \sendPrcS{\Traveler}{\availableM}
}}{A}}$\\
$\P_{\Airline}$&=&$\rcvMsg{\Traveler}{\queryM}{x}{\P'_{\Airline}}\ $ where $\ \P'_{\Airline}=
          \CkOne{\extChPr{
            \rcvPrcS{\Traveler}{\discardM}\ ,\
            \rcvMsgS{\Traveler}{\reserveM}{
              \CkTwo{
              \intChPr{
 \sendPrcS{\Traveler}{\notAvailableM}                       \ ,\  
              \sendPrcS{\Traveler}{\availableM}
              }
              }{B}
             }
           }}{A}$
           \end{tabular}
}
\end{example}

In order to allow backward reductions,
the configurations of session participants contain both  active processes and sequences of checkpointed internal and external choices, denoting the processes that should run in case of rollbacks. 
\begin{definition}
{\em Configurations}, ranged over by $\PrConf$, are pairs $\pairPr{\BP}{\P}$, where $\P$ is a process, {\em the active process},
and 
{\myformula{$
\BP::=\noBP~  \sep  ~\Push{\BP}{\CkOne{\P}{A}}
$}
}
is a (possibly empty) sequence of checkpointed processes, dubbed {\em checkpointed sequence}. 
\end{definition}
In the sequence $\Push{\BP}\P$ we call $\P$ the {\em top process}.

 {\em Multiparty sessions}, ranged over by $\Nw$, are parallel compositions of pairs participant/configuration (denoted by $\confEl{\pp}{\PrConf}$):  \myformula{$\Nw::=\confEl{\pp}{\PrConf}~  \sep  ~\Nw\pc\Nw$}

{\em Networks}, ranged over by $\Nt$, are parallel composition of sessions: \myformula{$\Nt::=\Nw~  \sep  ~\Nt\PC\Nt$}

\mysubsection{Operational Semantics}\labelx{opSemanticsSection}

The LTS of configurations is given in Figure~\ref{rrc}. The forward rules are as expected, only internal choices with more than one branch can silently reduce. 
When the active process crosses a checkpoint, it is memorised at the top of the checkpointed sequence (rules \rln{[\ck\chc]} and \rln{[\ck Rcv]}). The backward rule can choose as the new active process an arbitrary process $\P$ in the current checkpointed sequence: the name of the checkpoint of $\P$ decorates the transition (rule \rln{[\rb P]}). This is essential in order to guarantee that the backward reduction of multiparty sessions produces well-behaved sessions, see rule \rln{[\rb M]} in Figure~\ref{rrsn}. 

\begin{figure}[h] 
{\small 
\begin{center}
$
\begin{array}{c}
\myruleOneLine{\pairPr{\BP}{\intChoice{i}{I}{\pp}{\ell}{\e}{\P}}\stackred{\silent}\pairPr{\BP}{\sendMsg{\pp}{\ell_k}{\e_k}{\P_k}}\quad k\in I\not=\set k}{[\chc]} \\ \\
\myruleOneLine{\pairPr{\BP}{\CkOne{\intChoice{j}{J}{\pp}{\ell}{\e}{\P}}{A}}\stackred{\silent}
\pairPr{\Push{\BP}{\CkOne{\intChoice{j}{J}{\pp}{\ell}{x}{\P}}{A}}}{\sendMsg{\pp}{\ell_k}{\e_k}{\P_k}}\quad k\in J\not=\set k}
{{[\ck\chc]}}\\ \\
\myruleOneLine{\pairPr{\BP}{\sendMsg{\pp}{\ell}{\e}{\P}}\stackred{\pp!\ell(\val)}\pairPr{\BP}{\P}\quad \eval{\e}{\val}}{[{\snd}]} \\ \\
\myruleOneLine{\pairPr{\BP}{\extChoice{i}{I}{\pp}{\ell}{x}{\P}}\stackred{\pp?\ell_j(\val)}\pairPr{\BP}{\subst{\P_j}{x}{\val}}\quad j\in I}{[{Rcv}]}\\ \\
\myruleOneLine{\pairPr{\BP}{\CkOne{\extChoice{j}{J}{\pp}{\ell}{x}{\P}}{A}}\stackred{\pp?\ell_k(\val)}\pairPr{\Push{\BP}{\CkOne{\extChoice{j}{J}{\pp}{\ell}{x}{\P}}{A}}}{{\subst{\P_k}{x}{\val}}}\quad k\in J}{[CkRcv]}\\ \\
\myruleOneLine{\pairPr{\Push{\Push{\BP}{\CkOne{{\P}}{A}}}{\BP'}}{\P'}\stackred{A}\pairPr{\BP}{\CkOne{{\P}}{A}}}{[\rb P]}
\end{array}
$
\end{center}
}
\caption{ Reduction rules of configurations.}\label{rrc}
\end{figure}

The operational semantics of sessions and network is shown in Figure~\ref{rrsn}, where $\alpha$ ranges over $\silent, \pp!\ell(\val), \pp?\ell(\val), A$. This semantics relies on a structural equivalence $\equiv$ for which the parallel operators $\pc$ and $\PC$ are commutative and associative and have $\confEl{\pp}{\pairPr\epsilon\inact}$ as neutral element. 

The only interesting rule is rule  \rln{[\rb M]}. In this rule we use the mapping $\A$, that associates to a configuration  the set of 
the checkpoint names
of processes belonging to its checkpointed sequence. Formally:  
   \myformula{$
 \A(\pairPr{\BP}{\inact})=\A(\BP)\qquad\A(\noBP)=\emptyset\qquad \A(\Push{\BP}{\CkOne{\P}{A}})={\A(\BP)}\cup\set A
$}
This mapping is defined only for configurations having $\inact$ as their active process. This is enough since the typing rules ensure that the processes which did not traverse some checkpoints are terminated. 
A multiparty session can roll back to processes at the checkpoint named $A$ only if all the sets $\A$ of the configurations which remain unchanged are defined (i.e. $\inact$ is the active process of these configurations) and they do not contain $A$. 

\begin{figure}[h] 
{\small \begin{center}
$
\begin{array}{c}
\myrule{\PrConf\stackred{\alpha}\PrConf'}
{\confEl{\pp}{\PrConf}\stackred{\alpha}\confEl{\pp}{\PrConf'} }{[PC]}\qquad 
\myrule{\Nw\stackred{~~\silent~~}\Nw'}
{\Nw\pc\Nw''\stackred{~~\silent~~}\Nw'\pc\Nw'' }{[PrM]}\qquad 
  \myrule{\Nw_1\equiv\Nw'_1\quad\Nw'_1\stackred{~~\silent~~}\Nw'_2\quad\Nw'_2\equiv\Nw_2}
{\Nw_1\stackred{~~\silent~~}\Nw_2 }{[EqM]}\\ \\
\myrule{\confEl{\pp}{\PrConf_\pp}\stackred{\q!\ell(\val)}\confEl{\pp}{\PrConf'_\pp}
\quad\quad
\confEl{\q}{\PrConf_\q}\stackred{\pp?\ell(\val)}\confEl{\q}{\PrConf_\q'}
}
{\confEl{\pp}{\PrConf_\pp}\pc\confEl{\q}{\PrConf_\q}
\stackred{~~\silent~~}
\confEl{\pp}{\PrConf_\pp'}\pc\confEl{\q}{\PrConf_\q'}
}{[Com]}\\ \\
\myrule{\confEl{\pp_i}{\PrConf_{\pp_i}}\stackred{A}\confEl{\pp_i}{\PrConf'_{\pp_i}}~~\forall i\in I
\quad\quad
A\not\in\A(\PrConf_{\q_j})~~\forall j\in J
}
{\Pi_{i\in I}{\confEl{\pp_i}{\PrConf_{\pp_i}}\pc\Pi_{j\in J}\confEl{\q_j}{\PrConf_{\q_j}}
\stackred{~~\silent~~}
\Pi_{i\in I}\confEl{\pp_i}{\PrConf_{\pp_i}'}\pc\Pi_{j\in J}\confEl{\q_j}{\PrConf_{\q_j}}
}}{[\rb M]}\\\\
\myrule{\Nt\stackred{~~\silent~~}\Nt'}
{\Nt\PC\Nt''\stackred{~~\silent~~}\Nt'\PC\Nt'' }{[PrN]}\qquad \qquad \qquad \qquad
 \myrule{\Nt_1\equiv\Nt'_1\quad\Nt'_1\stackred{~~\silent~~}\Nt'_2\quad\Nt'_2\equiv\Nt_2}
{\Nt_1\stackred{~~\silent~~}\Nt_2 }{[EqN]}\\\ \\
\end{array}
$
\end{center}
}
\caption{Reduction rules of sessions and networks.}\labelx{rrsn}
\end{figure}

In networks the different sessions reduce independently. For this reason the same participant can interact in different sessions belonging  to the same network. We use $\stackredstar{~~\silent~~}$ to denote the transitive
and reflexive closure of the $\stackred{~~\silent~~}$ relation.

\begin{example}\label{ex:opSemantics}
Consider the processes of Example~\ref{ex:syntax}. We first give some reductions possible
for the configurations of the three participants starting from empty checkpointed sequences.
{\small 
\myformula{$
\begin{array}{lllr}
{\pairPr{\noBP}{\P_{\Traveler}}}& \stackred{\Hotel!\queryM(\infoM)}
     {\pairPr{\noBP}
             {\sendMsg{\Airline}{\queryM}{\infoM}
{\P'_{\Traveler}}}=\PrConf_1
     }   & \rln{[\snd]} & (1)\\
   & \stackred{\Airline!\queryM(\infoM)}
   {\pairPr{\noBP}
             {\P'_{\Traveler}}=\PrConf_2
     }   & \rln{[\snd]}& (2)\\ 
& \stackred{\Hotel?\availableM}
   {\pairPr{\P'_{\Traveler}}
             {{\sendMsgS{\Airline}{\reserveM}{
                    (\CkOne{ 
                    \extChPr{\rcvPrcS{\Airline}{\notAvailableM}\ ,\ \rcvPrcS{\Airline}{\availableM}}
                     }
                    {B})
                    }}}=\PrConf_3
     }   & \rln{[CkRcv]}& (3)\\ 
& \stackred{\Airline!\reserveM}
   {\pairPr{\P'_{\Traveler}}
             {\CkOne{ 
                    \extChPr{\rcvPrcS{\Airline}{\notAvailableM}\ ,\ \rcvPrcS{\Airline}{\availableM}}
                     }
                    {B}
                    }=\PrConf_4
     }   & \rln{[\snd]}& (4)\\ 
& \stackred{\Airline?\notAvailableM}
   {\pairPr{\Push{\P'_{\Traveler}}{(\CkOne{ 
                    \extChPr{\rcvPrcS{\Airline}{\notAvailableM}\ ,\ \rcvPrcS{\Airline}{\availableM}}
                     }
                    {B})}}
             {\inact}=\PrConf_5
     }   & \rln{[CkRcv]}& (5)
\end{array}
$}
}
{\small 
\myformula{$
\begin{array}{lllr}    
{\pairPr{\noBP}{\P_{\Hotel}}}& \stackred{\Traveler?\queryM(\infoM)}
     {\pairPr{\noBP}
             {\CkTwo{\intChPr{\sendPrcS{\Traveler}{\notAvailableM},\sendPrcS{\Traveler}{\availableM}  }}{A}}=\PrConf_6
     }   & \rln{[Rcv]} & (6)\\
   & \stackred{\silent}
   {\pairPr{\CkTwo{\intChPr{\sendPrcS{\Traveler}{\notAvailableM},\sendPrcS{\Traveler}{\availableM}  }}{A}}
             {\sendPrcS{\Traveler}{\availableM}}=\PrConf_7
     }   & \rln{[CkChc]}& (7)\\  
   & \stackred{\Traveler!\availableM}
{\pairPr{\CkTwo{\intChPr{\sendPrcS{\Traveler}{\notAvailableM},\sendPrcS{\Traveler}{\availableM}  }}{A}}
             {\inact}=\PrConf_8
     }   & \rln{[\snd]}& (8)
\end{array}
$}
}
{\small 
\myformula{$
\begin{array}{lllr}    
{\pairPr{\noBP}{\P_{\Airline}}}& \stackred{\Traveler?\queryM(\infoM)}
     {\pairPr{\noBP}
             {\P'_{\Airline}}=\PrConf_9
     }   & \rln{[Rcv]} & (9)\\
   & \stackred{\Traveler?\reserveM}
   {\pairPr{\P'_{\Airline}}
             {
              \CkTwo{
              \intChPr{\sendPrcS{\Traveler}{\notAvailableM},\sendPrcS{\Traveler}{\availableM}  }
              }{B}}=\PrConf_{10}
     }   & \rln{[CkRcv]}& (10)\\  
   & \stackred{\silent}
   {\pairPr{\Push{\P'_{\Airline}}{\CkTwo{\intChPr{\sendPrcS{\Traveler}{\notAvailableM},\sendPrcS{\Traveler}{\availableM}  }}{B}}}
             {\sendPrcS{\Traveler}{\availableM}}=\PrConf_{11}
     }   & \rln{[CkChc]}& (11)\\  
   & \stackred{\Traveler!\availableM}
{\pairPr{\Push{\P'_{\Airline}}{\CkTwo{\intChPr{\sendPrcS{\Traveler}{\notAvailableM},\sendPrcS{\Traveler}{\availableM}  }}{B}}}
             {\inact}=\PrConf_{12}
     }   & \rln{[\snd]}& (12)           
\end{array}
$}
}
Starting from the initial session $\Nw= \confEl{\Traveler}{\pairPr{\noBP}{\P_{\Traveler}}}~  \sep  ~ \confEl{\Hotel}{\pairPr{\noBP}{\P_{\Hotel}}} ~  \sep  ~ \confEl{\Airline}{\pairPr{\noBP}{\P_{\Airline}}}$, in which all participants have their associated processes as active processes and empty checkpointed sequences, we can have the 
reductions shown in Figure~\ref{r}.

\begin{figure}[h]
{\small 
{$
\begin{array}{lll}
\Nw  & \stackred{\silent}
\confEl{\Traveler}{\PrConf_1}~  \sep  ~ \confEl{\Hotel}{\PrConf_6} ~  \sep  ~ \confEl{\Airline}{\pairPr{\noBP}{\P_{\Airline}}}& \mbox{\rm using (1) and (6) and rules \rln{[Pc]}, \rln{[Com]}, and \rln{[PrM]}}\\
& \stackred{\silent} \confEl{\Traveler}{\PrConf_2}~  \sep  ~ \confEl{\Hotel}{\PrConf_6} ~  \sep  ~ \confEl{\Airline}{\PrConf_9} & \mbox{\rm using (2) and (9) and rules \rln{[Pc]}, \rln{[Com]}, and \rln{[PrM]}}\\
& \stackred{\silent} \confEl{\Traveler}{\PrConf_2}~  \sep  ~ \confEl{\Hotel}{\PrConf_7} ~  \sep  ~ \confEl{\Airline}{\PrConf_9} & \mbox{\rm using (7) and rules \rln{[Pc]}, and \rln{[PrM]}}\\
& \stackred{\silent} \confEl{\Traveler}{\PrConf_3}~  \sep  ~ \confEl{\Hotel}{\PrConf_8} ~  \sep  ~ \confEl{\Airline}{\PrConf_9} & \mbox{\rm using (3) and (8) and rules \rln{[Pc]}, \rln{[Com]}, and \rln{[PrM]}}\\
& \stackred{\silent} \confEl{\Traveler}{\PrConf_4}~  \sep  ~ \confEl{\Hotel}{\PrConf_8} ~  \sep  ~ \confEl{\Airline}{\PrConf_{10}} & \mbox{\rm using (4) and (10) and rules \rln{[Pc]}, \rln{[Com]}, and \rln{[PrM]}}\\
& \stackred{\silent} \confEl{\Traveler}{\PrConf_4}~  \sep  ~ \confEl{\Hotel}{\PrConf_8} ~  \sep  ~ \confEl{\Airline}{\PrConf_{11}} & \mbox{\rm using (11) and rules \rln{[Pc]}, and \rln{[PrM]}}\\
& \stackred{\silent} \confEl{\Traveler}{\PrConf_5}~  \sep  ~ \confEl{\Hotel}{\PrConf_8} ~  \sep  ~ \confEl{\Airline}{\PrConf_{12}} & \mbox{\rm using (5) and (12) and rules \rln{[Pc]}, \rln{[Com]}, and \rln{[PrM]}}\\
\end{array}
$}
}
\caption{Example of multiparty session reductions.}\labelx{r}
\end{figure}
From the final session  
we can do a rollback to $B$ as follows:
{\small 
\myformula{$
\prooftree
{
{\prooftree
\PrConf_5\stackred{B}\PrConf_4\quad\rln{[RbP]}
\justifies
\confEl{\Traveler}{\PrConf_5}\stackred{B}\confEl{\Traveler}{\PrConf_4}
\using
\rln{[Part]}
\endprooftree
}
\quad
{\prooftree
\PrConf_{12}\stackred{B}\PrConf_{10}\quad\rln{[RbP]}
\justifies
\confEl{\Airline}{\PrConf_{12}}\stackred{B}\confEl{\Airline}{\PrConf_{10}}
\using
\rln{[Part]}
\endprooftree
}
\quad
B\not\in\A(\PrConf_{8})
}
\justifies
\confEl{\Traveler}{\PrConf_5}~  \sep  ~ \confEl{\Hotel}{\PrConf_8} ~  \sep  ~ \confEl{\Airline}{\PrConf_{12}}
\stackred{\silent} 
\confEl{\Traveler}{\PrConf_4}~  \sep  ~ \confEl{\Hotel}{\PrConf_8} ~  \sep  ~ \confEl{\Airline}{\PrConf_{{10}}}
\using
\rln{[RbM]}
\endprooftree
$}
}
In a similar way, we can do a rollback to $A$ from the session 
$\confEl{\Traveler}{\PrConf_5}~  \sep  ~ \confEl{\Hotel}{\PrConf_8} ~  \sep  ~ \confEl{\Airline}{\PrConf_{12}}$ producing $\confEl{\Traveler}{\PrConf_2}~  \sep  ~ \confEl{\Hotel}{\PrConf_{{6}}} ~  \sep  ~ \confEl{\Airline}{\PrConf_{9}}$.
\end{example}


\mysection{Type System}\label{ts}
\mysubsectionA{Types}
{\em Sorts}  are ranged over by $\ST$ and defined by:\qquad
$ \ST    \quad   ::=   \quad                         \tint \sep\tbool\sep\ldots$

Single-threaded global types describe the whole conversation scenarios of multiparty sessions, when they reduce forward. The communications can be either without or with named checkpoints. 

Global types instead take into account both forward and backward reductions of multiparty sessions. They have therefore a structure which mimics the structure of configurations. 

\begin{definition}\labelx{sgtypesdef}
\begin{myenumerate}{}
\item\labelx{sgtypesdef1} {\em Single-threaded global types
} are defined by:
{\small 
\myformula{$\G~~ ::= ~~\Gvti\pp \ell \ST {\G} ~\sep~\CkOne{\Gvtj{\pp}{\ell}{\ST}{\G}}{A} ~\sep ~\mu\ty.\G ~\sep ~\ty ~\sep ~\End
$}
} 
\item\labelx{sgtypesdef2} {\em Global types} are pairs $\pairGT\GG\G$, where $\GG$ is a (possibly empty) sequence of single-threaded global types with checkpoints having distinct names:
\myformula{$\GG::=\s~\sep ~\GG\cdot\CkOne\G A$}
\end{myenumerate}
\end{definition}

We say that $\G$ is the {\em active type} of $\pairGT\GG\G$. The condition in point~(\ref{sgtypesdef2}) of previous definition ensures that, if $\pairGT\GG\G$ is a global type, and $\GG=\Push{\Push{\GG'}{\CkOne{\G'} A}}{\GG''}$, then no single-threaded global type in $\GG'$, $\GG''$ can be checkpointed by $A$.

{\em Session types} correspond to projections of single-threaded global types onto the individual participants. Therefore, they can be decorated by named checkpoints. 
Inspired by~\cite{P11}, we use intersection and union types instead of standard branching and selection, see~\cite{CHY08}, to take advantage of the subtyping induced by subset inclusion.
The grammar of session types, ranged over by $\T$, is then 
{\small 
\myformula{$
\begin{array}{ll}
\T ~~  ::= & \intMsg{i}{I}{\pp}{\ell}{S}{\T} ~  \sep  ~ 
\unMsg{i}{I}{\pp}{\ell}{S}{\T} 
~  \sep  ~ \CkOne{\intMsg{j}{J}{\pp}{\ell}{S}{\T}}{A} ~\sep
\CkTwo{\unMsg{j}{J}{\pp}{\ell}{S}{\T}}{A} 
~  \sep  ~  \mu\ty.\T  ~\sep  ~ \ty   ~\sep ~  \End 
\end{array}
$}
}
In both global and session types we require  that:
\begin{myitemize}
\item $I,J$ are not empty sets and $J$ is not a singleton; 
\item $\ell_h\not=\ell_k$ if $h,k\in I$ or $h,k\in J$;
\item the name $A$ does not occur in a type checkpointed by $A$;
\item recursion is guarded.
\end{myitemize}
We constrain $J$ to contain at least two elements since it makes sense to reverse only when an alternative branch could be taken. 

Recursive types with the same regular tree are considered equal~\cite[Chapter 20, Section 2]{pier02}. In writing types we omit unnecessary brackets, intersections, unions, and $\End$. 

We extend the original definition of projection of single-threaded global types onto participants of~\cite{CHY08} in the line of~\cite{DGJPY16}. This generalisation allows session participants to  behave differently in  alternative branches of the same global type, after they have received a message identifying the branch. 
We define the partial operator $\interRcv$ from sets of session types - all uncheckpointed or checkpointed by the same name - to (possibly checkpointed) intersection types. 
The operator is defined when the set of session types contains only intersection of types with same sender and different labels. Otherwise, it is undefined. More precisely:
 \myformula{$\interRcv(\set{\T_i}_{i\in I})=\begin{cases}
 \bigwedge_{i\in I}\T_i     & \text{if there is }\pp\text{ such that }  \T_i=\intMsgLong{h}{H_i}{\pp}{\ell^{(i)}}{S^{(i)}}{\T_h^{(i)}}\text{ for all }i\in I,\\
 &\text{and }\ell_h^{(i)}\not=\ell_k^{(j)}
 \text{for all }h\in H_i, k\in H_j \text { and } i,j\in I, i\not=j\\
 \CkOne{\interRcv(\set{\T'_i}_{i\in I})}A     & \text{if }\T_i =\CkOne{\T'_i}A \text{ for all }i\in I\\
 \text{undefined}     & \text{otherwise}.
\end{cases}
$}
Notice that in defining $\interRcv$ we could allow identical types in $\T_i$ and $\T_j$ by the idempotence of intersection. We prefer the current choice for its simplicity. 

Figure~\ref{projection} gives the 
projection of single-threaded global types 
onto participants. Notice that, the projection of a checkpointed type onto a participant not involved in the initial communication is defined  only when it receives uncheckpointed messages in all the branches. For this reason, and since $\interRcv$ is a partial operator, projections onto some participants may be undefined. A single-threaded  global type $\G$ is {\em well formed} if all the projections of $\G$ onto all participants are defined. In the following we assume that all single-threaded global types are well formed.

\begin{figure}[h] 
{\small \begin{center}
$\proj{({\Gvti{\pp}{\ell}{\ST}{\G}})}\pr=\begin{cases}
   {\unMsgLong{i}{I}{\q}{\ell}{S}{\proj{\G_i}{\pr}}} & \text{if }\pr=\pp \\
   {\intMsgLong{i}{I}{\pp}{\ell}{S}{\proj{\G_i}{\pr}}}  & \text{if }\pr=\q\\
   \End & \text{if $\pr \neq \pp$ and $\pr \neq\q$ and $\proj{\G_i}{\pr}=\End$ for all $i\in I$}\\
\interRcv(\{\proj{\G_{i}}{\pr}\mid i\in I\})  & \text{if $\pr \neq \pp$ and $\pr \neq\q$}
\end{cases}$
\\[2mm]
$\proj{(\CkGl{\Gvtj{\pp}{\ell}{\ST}{\G}}{A})}\pr=\begin{cases}
   \CkTwo{\unMsgLong{j}{J}{\q}{\ell}{S}{\proj{\G_j}{\pr}}}{A} & \text{if }\pr=\pp \\
   \CkOne{\intMsgLong{j}{J}{\pp}{\ell}{S}{\proj{\G_j}{\pr}}}{A}  & \text{if }\pr=\q\\
    \End  & \text{if $\pr \neq \pp$ and $\pr \neq\q$ and $\proj{\G_{j}}{\pr}=\End$ for all $j\in J$}\\
      \CkOne{\interRcv(\{\proj{\G_{j}}{\pr}\mid j\in J\})}{A}  & \text{if $\pr \neq \pp$ and $\pr \neq\q$}\\
  & \text{and $\interRcv(\{\proj{\G_{j}}{\pr}\mid j\in J\})$ is not checkpointed}\\
\end{cases}$
\\[2mm]
 $\proj{(\mu\ty.\G)}\pp=\begin{cases}
  \mu\ty. \proj{\G}{\pp}   & \text{if }\pp\in\G, \\
   \End   & \text{otherwise}.
\end{cases}$~~~~~$\proj{\ty}{\pp} = \ty $~~~~~~~$\proj{\End}{\pp} = \End $ 
\end{center}
}
\caption{Projection of single-threaded  global types onto participants.}\labelx{projection}
\end{figure}

\begin{example}\label{ex:globalType}
Assuming that $\stringT$ is the sort of strings, the global type for the interaction of Figure~\ref{fig:exampleUML} is $\G$ defined as follows:

\begin{tabular}{c}
$\G=\GlComm{\Traveler}{\Hotel}\queryM(\stringT).\GlComm{\Traveler}{\Airline}\queryM(\stringT).\G_1$ where\\
$\G_1=\CkGlComm{\Hotel}{\Traveler}{A}\{\notAvailableM.\GlComm{\Traveler}{\Airline}\discardM\ ,\ 
           \availableM.\GlComm{\Traveler}{\Airline}\reserveM.\G_2\}$ and
$\G_2=\CkGlComm{\Airline}{\Traveler}{B}\{\notAvailableM\ ,\ \availableM\}$  
\end{tabular}

\noindent
$\G$ is well formed since the projections onto its participants are:

\begin{tabular}{lll}
$\proj{\G}{\Traveler}$&=&$\sendMsg{\Hotel}{\queryM}{\stringT}
           {\sendMsg{\Airline}{\queryM}{\stringT} 
              {\CkOne{\interTy{
              \rcvMsgS{\Hotel}{\notAvailableM}{} {{\sendPrcS{\Airline}{\discardM}}}\ ,\
                  \rcvMsgS{\Hotel}{\availableM}{\sendMsgS{\Airline}{\reserveM}
                  {\CkOne{\interTy{\rcvPrcS{\Airline}{\notAvailableM}\ ,\ \rcvPrcS{\Airline}{\availableM}}}{B}}
                  }}}{A}}}
                  $\\
$\proj{\G}{\Hotel}$&=&$\rcvMsg{\Traveler}{\queryM}{\stringT}{
                 \CkTwo{\unionTy{
                 \sendPrcS{\Traveler}{\notAvailableM}\ ,\ \sendPrcS{\Traveler}{\availableM}
                 }}{A}
                 }$\\
$\proj{\G}{\Airline}$&=&$\rcvMsg{\Traveler}{\queryM}{\stringT}
           {\CkOne{\interTy{\rcvPrcS{\Traveler}{\discardM}\ ,\
                 \rcvMsgS{\Traveler}{\reserveM}
                 {\CkTwo{\unionTy{
                 \sendPrcS{\Traveler}{\notAvailableM}\ ,\ \sendPrcS{\Traveler}{\availableM}
                 }}{B}}
                  }}{A}}$
                  \end{tabular}
\end{example}

In order to type checkpointed sequences and configurations we need (possibly empty) {\em sequences of checkpointed session types}, ranged over by $\RT$:
\myformula{$
\RT::=\s~  \sep  ~\Push{\RT}{\CkOne{\T}{A}}
$}
and pairs $\pairT{\RT}{\T}$, dubbed {\em configuration types}.

The typing is made more flexible by a subtyping relation on session types exploiting the standard inclusions for intersection and union. A checkpointed type can be a subtype only of a type checkpointed by the same name.   Figure~\ref{fig:subt} gives the subtyping rules: the double line in rules indicates
that the rules are interpreted {\em coinductively}~\cite[Chapter 21]{pier02}. Subtyping can be easily decided, see for example~\cite{GH05}. 

\begin{figure}[h]
{\small \centerline{$
\begin{array}{ccc}
\inferrule[\rulename{sub-end}]{}
  {\End \subt \End}&\qquad&
  \cinferrule[\rulename{sub-ck}]{\T\subt\T'}
  {\CkTwo{\T}{A}\subt\CkTwo{\T'}{A}}
  \\ \\
\cinferrule[\rulename{sub-in}]{
\forall i\in I: \quad \T_i \subt\T_i' }{
  \intMsg{i}{I\cup J}{\pp}{\ell}{S}{\T} \subt \intMsg{i}{I}{\pp}{\ell}{S}{\T'}
}
&\qquad&
\cinferrule[\rulename{sub-out}]{
 \forall i\in I: \quad \T_i \subt \T'_i  
}{\unMsg{i}{I}{\pp}{\ell}{S}{\T}
  \subt\unMsg{i}{I\cup J}{\pp}{\ell}{S}{\T'}
 }
\end{array}
$}
}
\caption{\label{fig:subt} Subtyping rules.}
\end{figure}

\mysubsection{Typing Rules} 
We distinguish six  kinds of typing judgements
\myformula{$
  \dere\Gamma\e\ST \quad\quad \derP\Gamma\P\T \quad\quad \derN\BP\RT\quad\quad \derN\PrConf{\pairT{\RT}{\T}}\quad\quad \derN\Nw\pairGT{\GG}{\G}\quad\quad \derNok\Nt
$}
where $\Gamma$ is the environment $\Gamma ::= \emptyset \sep \Gamma, x:\ST \sep \Gamma, X:\T$ that associates expression variables with sorts and process variables with session types. 

Figure~\ref{fig:typingPr} gives the typing rules for processes. Processes typing exploits the correspondence between external choices and intersections, internal choices and unions. A checkpointed process has a type checkpointed by the same name. 

\begin{figure}[h]
{\small 
\centerline{$
\begin{array}{ccc}
 \myrule{\derP {\Gamma,x:\ST}{\P_i}{\T_i}
  }{\derP \Gamma{\extChoice{i}{I}{\pp}{\ell}{\e}{\P}}{\intMsg{i}{I}{\pp}{\ell}{S}{\T}}}{[t-In]}
  &\qquad&
  \myrule{\dere\Gamma{\e_i}{\ST_i}~~\  \derP {\Gamma}{\P_i}{\T_i}
   }{\derP \Gamma{\intChoice{i}{I}{\pp}{\ell}{\e}{\P}}{\unMsg{i}{I}{\pp}{\ell}{S}{\T}}}{[t-Out]}
  \\\\
  \myrule{\derP {\Gamma}{\P}{\T}}
  {\derP \Gamma{\CkOne{\P}A}{\CkOne\T A}}{[t-\ck]}
  &\qquad&
   \derP \Gamma{ \inact}\End~~\rln{[t-$\inact$]}
  \\\\
    \myrule{\derP {\Gamma,X:\T}{\P}{\T}}
  {\derP \Gamma{\mu X.\P}\T}{[t-Rec]}
&\qquad&
   \derP {\Gamma,X:\T}{X}{\T}~~\rln{[t-Var]}     \end{array}
$}
}
\caption{\label{fig:typingPr} Typing rules for processes.}
\end{figure}

\begin{figure}
{\small \centerline{$
\begin{array}{c}
\myrule{
\derS{\BP}{\RT}\quad\derP{}{\P}{\T}
}
{\derS{\Push\BP\P}{\PushT\RT\T}}
{[t-SP]}\qquad\myrule{
\derS{\BP}{\RT}\quad\derP{}{\P}{\T}
}
{\derN{\confEl{\pp}{\pairPr\BP\P}}{\pairT\RT\T}}
{[t-C]}\\ \\
\myrule{\derN{\confEl{\pp_i}{\PrConf_i}}{\pairT{\RT_i}{\T_i}} \quad\agr{\pairT{\RT_i}{\T_i}}{{\pp_i}}{\pairGT{\GG}{\G}}\quad i\in I\quad \len\GG=\text{max}\set{\len{\RT_i}\mid i\in I}
\quad \pt(\GG)\cup\pt(\G)\subseteq\set{\pp_i\mid i\in I}}
{\derN{\Pi_{i\in I}\confEl{\pp_i}{\PrConf_i}}{\pairGT{\GG}{\G}}}
{[t-M]}\\\\
\myrule{\derN{\Nw}{\pairGT{\GG}{\G}}}{\derNok{\Nw}}{[t-\checkmark]}\qquad
\myrule{\derNok{\Nt}\quad\derNok{\Nt'}}{\derNok{\Nt\PC\Nt'}}{[t-N]}
\end{array}
$}}
\caption{\label{fig:typingNet}Typing rules for checkpointed sequences, multiparty sessions and networks.}
\end{figure} 

Figure~\ref{fig:typingNet} gives the remaining typing rules. Sequences of checkpointed processes are typed by sequences of checkpointed types (rule \rln{[t-SP]}). Configurations are typed by configuration types (rule \rln{[t-C]}). 

The most interesting rule is rule \rln{[t-M]} for typing multiparty sessions.
The set $\pt(\G)$ of participants of a single-threaded global type is defined by
\begin{center}
$\pt(\Gvti\pp \ell \ST {\G})=\{\pp,\q\}\cup\pt(\G_i)~(i\in I)\footnotemark\quad
\pt(\CkOne{\G}{A})=\pt(\mu\ty.\G)=\pt(\G)\quad\pt(\End)=\pt(\ty)=\emptyset$
\end{center}
\footnotetext{The projectability of $\G$ ensures $\G_i=\G_j$ for all $i,j\in I$.}
\noindent
The definition is extended to sequences of checkpointed single-threaded global types by \myformula{$\pt(\s)=\emptyset\qquad\qquad\pt(\GG\cdot\CkOne\G A)=\pt(\GG)\cup\pt(\G)$}
The condition $\pt(\GG)\cup\pt(\G)\subseteq\set{\pp_i\mid i\in I}$ ensures the presence of all session participants and allows the
typing of sessions containing 
$\confEl{\pp}{\pairPr{\epsilon}{\inact}}$ (also if $\pp\not\in\pt(\GG)\cup\pt(\G)$), a property needed to guarantee invariance of types under structural equivalence.
Rule \rln{[t-M]} requires that the types of the active processes are subtypes of the projections of a unique global type. This condition must hold also after a rollback, in which all the processes checkpointed by $A$, in the respective checkpointed sequences, become the active processes (reduction rule \rln{[\rb M]}). For this reason we type a multiparty session by a global type such that the sequence of single-threaded global types has length equal to the maximum of the lengths of the sequences of session types in the types of configurations. This is expressed by the condition $\len\GG=\text{max}\set{\len{\RT_i}\mid i\in I}$ in the premise of rule \rln{[t-M]}, where $\len\GG$ is the length of the sequence $\GG$ and $\len\RT$ is the length of the sequence $\RT$. This requirement is clearly not enough. Further constraints are prescribed 
by the agreement between global types and configuration types of session participants. This agreement is made more flexible by the use of subtyping. 
\begin{definition}\label{dar}Let $\RT=\T_1\cdot\ldots\cdot\T_n$ and $\GG=\G_1\cdot\ldots\cdot\G_m$. 
The configuration type $\pairT{\RT}{\T}$ {\em $\pp$-agrees} with the global type $\pairGT\GG\G$ (notation $\agr{\pairT{\RT}{\T}}{\pp}{\pairGT\GG\G}$) if all the following conditions hold: 
\begin{myenumerate}{}
\item\label{dar1} $\T_i\leq\proj{\G_i}{\pp}$ for $1\leq i\leq n$;
\item\label{dar3} if $\T=\End$, then $n\leq m$ and $\proj{\G_i}{\pp}=\proj{\G}{\pp}=\End$ for $n+1\leq i\leq m$;
\item\label{dar4} if $\T$ is a union type, then $n=m$ and $\T\leq\proj{\G}{\pp}$;
\item\label{dar5} if $\T$ is an intersection type, then either $n=m$ and $\T\leq\proj{\G}{\pp}$ or $n=m-1$ and $\T\leq\proj{\G_m}{\pp}$ and $\T=\CkTwo{\T'}{A}$ 
 and $\T'\leq\proj{\G}{\pp}$. 
\end{myenumerate}
\end{definition}
Condition~\ref{dar1}, typing rule \rln{[t-\ck]}, and the fact that the checkpoints 
in $\GG$ have distinct names (see Definition~\ref{sgtypesdef}), ensure that, after a rollback, 
the processes becoming active are in the same position inside the checkpointed sequences.
 Moreover, these processes have types which are subtypes of the projections of the same single-threaded global type.
Condition~\ref{dar3} deals with the case in which the active process of $\pp$ is $\inact$. Once the active process of a participant is $\inact$, no more process are added to its checkpointed sequence, so, in case of a rollback to a checkpoint crossed afterwards, by some other participants, its active process would remain $\inact$. Therefore the projection of the global type should be $\End$.
Conditions~\ref{dar4} and~\ref{dar5} deal with the case in which the type of the active process of $\pp$
is a union or an intersection.  If $n=m$,  
then these conditions require that its type is subtype of the projection of the global type $\G$, i.e., $\T\leq\proj{\G}{\pp}$. If the active process of $\pp$ is typed by an intersection, then it must
be an external choice of input processes (rule \rln{[T-In]}). In this case the global type $\G$ could capture the situation in which a participant $\q$ has internally chosen one branch, 
memorised in the checkpointed sequence its active process, and the active process of $\q$ has became an
uncheckpointed output (reduction rule \rln{[\ck\chc]}). 
This means that, the length of the checkpointed sequence of $\q$ must be $m$, while the length of the 
checkpointed sequence of $\pp$ must be $m-1$.
To deal with this situation, since an uncheckpointed and a checkpointed type cannot be projections of the same global type, in condition~\ref{dar5}, we require that $\T\leq\proj{\G_m}{\pp}$ and $\T=\CkTwo{\T'}{A}$ and $\T'\leq\proj{\G}{\pp}$. 
Notice that, $\G$ must be uncheckpointed and $\G_m$ must be checkpointed by $A$. This condition is illustrated in Example~\ref{ex:typing}.

Rule \rln{[t-M]} requires that all types of the configurations which built the multiparty session agree with the global type of the session itself, conditions $\derN{\confEl{\pp_i}{\PrConf_i}}{\pairT{\RT_i}{\T_i}}$   and $\agr{\pairT{\RT_i}{\T_i}}{{\pp_i}}{\pairGT{\GG}{\G}}$ for  $i\in I$.

A network is well typed if and only if all its multiparty sessions are well typed. 

\begin{example}\label{ex:typing}
To show the typings for the networks produced by the reduction of Example~\ref{ex:opSemantics} consider the
global type $\G$ of Example~\ref{ex:globalType}. We can derive
\begin{myenumerate}{}
\item $\derN{\confEl{\Traveler}{\pairPr{\noBP}{\P_{\Traveler}}}~  \sep  ~ \confEl{\Hotel}{\pairPr{\noBP}{\P_{\Hotel}}} ~  \sep  ~ \confEl{\Airline}{\pairPr{\noBP}{\P_{\Airline}}}}{\pairGT{\s}{\G}}$
\item $\derN{\confEl{\Traveler}{\PrConf_1}~  \sep  ~ \confEl{\Hotel}{\PrConf_6} ~  \sep  ~ \confEl{\Airline}{\pairPr{\noBP}{\P_{\Airline}}}}{\pairGT{\s}{\GlComm{\Traveler}{\Airline}\queryM(\infoM).\G_1}}$
\item $\derN{\confEl{\Traveler}{\PrConf_2}~  \sep  ~ \confEl{\Hotel}{\PrConf_6} ~  \sep  ~ \confEl{\Airline}{\PrConf_9}}{\pairGT{\s}{\G_1}}$
\item $\derN{\confEl{\Traveler}{\PrConf_2}~  \sep  ~ \confEl{\Hotel}{\PrConf_7} ~  \sep  ~ \confEl{\Airline}{\PrConf_9}}{\pairGT{\G_1}{\GlComm{\Hotel}{\Traveler}\availableM.\GlComm{\Traveler}{\Airline}\reserveM.\G_2}}$
\item $\derN{\confEl{\Traveler}{\PrConf_3}~  \sep  ~ \confEl{\Hotel}{\PrConf_8} ~  \sep  ~ \confEl{\Airline}{\PrConf_9}}{\pairGT{\G_1}{\GlComm{\Traveler}{\Airline}\reserveM.\G_2}}$
\item $\derN{\confEl{\Traveler}{\PrConf_4}~  \sep  ~ \confEl{\Hotel}{\PrConf_8} ~  \sep  ~ \confEl{\Airline}{\PrConf_{10}}}{\pairGT{\G_1}{\G_2}}$
\item $\derN{\confEl{\Traveler}{\PrConf_4}~  \sep  ~ \confEl{\Hotel}{\PrConf_8} ~  \sep  ~ \confEl{\Airline}{\PrConf_{11}}}{\pairGT{\Push{\G_1}{\G_2}}{\GlComm{\Airline}{\Traveler}\availableM}}$
\item $\derN{\confEl{\Traveler}{\PrConf_5}~  \sep  ~ \confEl{\Hotel}{\PrConf_8} ~  \sep  ~ \confEl{\Airline}{\PrConf_{12}}}{\pairGT{\Push{\G_1}{\G_2}}{\End}}$
\end{myenumerate}
In typing 6, the active type $\G_2$ (see Example~\ref{ex:globalType}) is checkpointed by $B$ and the type of $\Traveler$ is $\proj{\G_2}\Traveler$, see the definition of $\PrConf_4$ in Example~\ref{ex:opSemantics}. In typing 7, the active type is $\GlComm{\Airline}{\Traveler}\availableM$. The type of $\Traveler$ remain the same and without the checkpoint $B$ is a subtype of $\proj{(\GlComm{\Airline}{\Traveler}\availableM)}\Traveler$. In these two typings the type of $\Traveler$ agrees with the first and the second case of condition~\ref{dar5} in Definition~\ref{dar}, respectively.
\end{example}



\mysection{Main Properties}\label{mp}

In this section we present the technical results of the paper. First we prove subject reduction for multiparty sessions (Theorem~\ref{th:subjectReduction}). This implies that well-typed networks respect the choreographies described by global types  (Theorem~\ref{sf}). This property is usually called session fidelity, see~\cite{DY11}. The progress theorem (Theorem~\ref{th:progress}) establishes reachability of all communications and backward reductions. 

As standard we start with an inversion lemma for processes, checkpointed sequences, configurations, multiparty sessions and networks. 

\begin{mylemma}[Inversion]\label{l:inversion}
\begin{myenumerate}{}
\item Let $\derP \Gamma{\P}\T$.
\begin{myenumerate}{}
\item\label{l:inversion1} If $\P=\extChoice{i}{I}{\pp}{\ell}{x}{\P}$, then $\T=\intMsg{i}{I}{\pp}{\ell}{S}{\T}$, and $\derP {\Gamma,\x{:}S_i}{\P_i}\T_i$ for $i\in I$.
\item\label{l:inversion2} If $\P=\intChoice{i}{I}{\pp}{\ell}{\e}{\P}$, then $\T=\unMsg{i}{I}{\pp}{\ell}{S}{\T}$, 
$\derP {\Gamma}{\e_i}S_i$, and $\derP {\Gamma}{\P_i}\T_i$ for $i\in I$. 
\item\label{l:inversion3} If $\P=\CkOne{\extChoice{j}{J}{\pp}{\ell}{x}{\P}}{A}$, then $\T=\CkOne{\intMsg{j}{J}{\pp}{\ell}{S}{\T}}{A}$, and $\derP {\Gamma,\x{:}S_j}{\P_j}\T_j$ for $j\in J$.
\item\label{l:inversion4} If $\P=\CkOne{\intChoice{j}{J}{\pp}{\ell}{\e}{\P}}{A}$, then $\T=\CkTwo{\unMsg{j}{J}{\pp}{\ell}{S}{\T}}{A}$, 
$\derP {\Gamma}{\e_j}S_j$, and $\derP {\Gamma}{\P_j}\T_j$ for $j\in J$. 
\item\label{l:inversion5} If $\P=\mu\recvar.\Q$, then $\derP {\Gamma,\recvar{:}\T}{\Q}\T$.
\item\label{l:inversion6} If $\P=\recvar$, then $\Gamma=\Gamma',\recvar{:}\T$.
\item\label{l:inversion7} If $\P=\inact$,  then $\T=\End$.
\end{myenumerate}
\item \label{l:inversion8} If $\derS{\Push\BP\P}{\RT}$, then $\RT=\PushT{\RT'}\T$ and $\derS{\BP}{\RT'}$ and $\derP{}{\P}{\T}$. 
\item \label{l:inversion9} If $\derN{\confEl{\pp}{\pairPr\BP\P}}{\pairT\RT\T}$, then $\derS{\BP}{\RT}$ and $\derP{}{\P}{\T}$.
\item \label{l:inversion10}  If $\derN{\Pi_{i\in I}\confEl{\pp_i}{\PrConf_i}}{\pairGT{\GG}{\G}}$, then $\derN{\confEl{\pp_i}{\PrConf_i}}{\pairT{\RT_i}{\T_i}}$ and $\agr{\pairT{\RT_i}{\T_i}}{{\pp_i}}{\pairGT{\GG}{\G}}$ for $i\in I$  and\\
 $\len\GG=\text{max}\set{\len{\RT_i}\mid i\in I}$ and  $\pt(\GG)\cup\pt(\G)\subseteq\set{\pp_i\mid i\in I}$. 
\item \label{l:inversion11} If $\derNok{\Nt}$, then either $\derN{\Nt}{\pairGT{\GG}{\G}}$ or $\Nt=\Nt'\pc\Nt''$ and $\derNok{\Nt'}$ and $\derNok{\Nt''}$. 
\end{myenumerate}
\end{mylemma}
\begin{proof}
Easy from the definition of the typing relation.
\end{proof}
The inversion lemma gives some important properties. 

Points~(\ref{l:inversion1}), (\ref{l:inversion2}), (\ref{l:inversion3}),(\ref{l:inversion4})  and (\ref{l:inversion7}) ensure that the processes are checkpointed iff their types are checkpointed with the same name. Moreover, input processes have intersection types, output processes have union types and the process $\inact$ has type $\End$. 

Point~(\ref{l:inversion8}) says that the length of checkpointed sequences is equal to the length of the sequences of their checkpointed session types. 

Point~(\ref{l:inversion10}) and Definition~\ref{dar} imply that in a well-typed multiparty session:
\begin{myitemize}
\item exactly one of the active processes  is an output process;
\item at least one of the active processes  is an input process.
\end{myitemize}
More precisely, if $\G=\CkGl{\Gvtj\pp \ell \ST {\G}}{A}$, then the active processes of participants $\pp$ and $\q$ must have types which are subtypes of $\proj\G\pp$ and $\proj\G\q$, respectively. E.g., this is the case of
typing 3 of Example~\ref{ex:typing}, where the active processes of $\Hotel$ and $\Traveler$ have types equal to the projections the active global type onto the respective participant. (Similarly for typing 6.)
If $\G=\Gvti\pp \ell \ST {\G}$, then the active process of participant $\pp$ must have a type which is a subtype of $\proj\G\pp$, as before. Instead, the active process of participant $\q$ can have either a type which is a subtype of $\proj\G\q$, or a checkpointed type $\CkOne{\T}A$ such that $\T\subt\proj\G\q$. 
E.g., this is the case of
typing 7 in Example~\ref{ex:typing}, where the active processes of $\Airline$ has an uncheckpointed 
output type (a union of a single type), whereas the active process of $\Traveler$  
has an intersection type checkpointed by $B$.
If $\G$ is checkpointed by $A$, then an active process different from $\inact$ is checkpointed by $A$ and the checkpointed sequence  has length $\len\GG$. Instead, if  $\G$ is uncheckpointed, then the output process is uncheckpointed and its checkpointed sequence  has length $\len\GG$, while an input process can be:
\begin{myitemize}
\item either uncheckpointed: in this case its checkpointed sequence  has length $\len\GG$, 
\item or checkpointed: in this case its checkpointed sequence  has length $\len\GG-1$.
\end{myitemize}
E.g., for the case of
typing 3 of Example~\ref{ex:typing}, the length of the checkpointed sequences of all participants is 0, and all
participants have checkpointed active processes, whereas
for
typing 7, the length of the checkpointed sequence of $\Airline$ is 2, whereas the ones of the other participants is 1.
Moreover, the active process of $\Traveler$ is checkpointed and the one of $\Hotel$ is $\inact$.
These properties follow from conditions~\ref{dar4} and~\ref{dar5} of Definition~\ref{dar}. Condition~\ref{dar5} of Definition~\ref{dar} implies also that, if $\G$ is uncheckpointed and the type of an active input process is checkpointed, then the name of its checkpoint is the name of the checkpoint of the global type on the top of $\GG$. Lastly, condition~\ref{dar1} of Definition~\ref{dar} ensures that all the processes in the checkpointed sequences are checkpointed by the same names as the global types in $\GG$, in the exact order. The only difference can be the length of the sequences, which must satisfy the other conditions of Definition~\ref{dar}.

Global types are not preserved under multiparty session reductions: this is expected, as they evolve according to the silent actions, the communications and  the rollbacks performed by the session participants. This evolution is formalised by the {\em reduction of global types}, which is the smallest pre-order relation closed under the rules of  Figure~\ref{redGlConf}. 
Rule \rln{[G-\ck\chc]} corresponds to the reduction of the output process of participant $\pp$ by rule \rln{[\ck\chc]} of Figure~\ref{rrc}. Notably, the process is checkpointed with name $A$. 
Rule \rln{[G-Com]} corresponds to the communication performed by rule \rln{[Com]} of Figure~\ref{rrsn}. I.e., the output process of participant $\pp$ sends a message labelled $\ell_k$ (rule \rln{[\snd]} of Figure~\ref{rrc}) and the input process of participant $\q$ receives this message  (rule \rln{[Rcv]} or \rln{[\ck Rcv]} of Figure~\ref{rrc}). Notice that, when rule \rln{[\ck Rcv]} is used, the checkpointed input process is added to the checkpointed sequence of participant $\q$.  Lastly, rule \rln{[G-\rb]} is used when the multiparty session rolls back by means of rule \rln{[\rb M]} of Figure~\ref{rrsn}. Let $A$ be the name of the checkpoint of $\G$, the participants must either reduce by  rule \rln{[\rb P]} of Figure~\ref{rrc} with a transition labelled $A$ or remain unchanged. In the unchanged configurations the processes belonging to the checkpointed sequences are not checkpointed by $A$ and  the active processes are $\inact$. We use $\redGstar$ to denote the transitive
and reflexive closure of the $\redG$ relation.

\begin{figure}[h] 
{\small \begin{center}
$
\begin{array}{l} 
\pairGT{\GG}{\CkGl{\Gvtj\pp \ell \ST {\G}}{A}}\redG
{\Push{\GG}
{(\CkGl{\GvtiLong{\pp}{\q}{\ell_j(\ST_j).\G_j}_{j\in J}}{A})}
}\prec\,{\GvtiLong{\pp}{\q}{\ell_j(\ST_j).\G_j}_{j\in J}}\quad\quad\rln{[G-\ck\chc]}\\ \\
\myruleOneLine{\pairGT{\GG}{\Gvti\pp \ell \ST {\G}}\redG
\pairGT{\GG}{\G_k}\quad k\in I}{[G-Com]}\\ \\
\myruleOneLine{\pairGT{\Push{\Push\GG\G}{\GG'}}{\G'}\redG
\pairGT\GG\G}{[G-\rb]}
\end{array}
$
\end{center}
}
\caption{Reduction rules of global types.
}\labelx{redGlConf}
\end{figure}

A standard substitution lemma is handy.
\begin{lemma}\label{s}
If $\derP {\Gamma,\x:\ST}{\P}{\T}$ and $\dere {\Gamma}{\e}{\ST}$ and $\eval\e\val$, then $\derP {\Gamma}{\subst\P\x\val}{\T}$.
\end{lemma}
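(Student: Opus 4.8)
The plan is to argue by structural induction on $\P$ (equivalently, by induction on the derivation of $\derP{\Gamma,\x{:}\ST}{\P}{\T}$), using the Inversion Lemma (Lemma~\ref{l:inversion}, point~1) to identify the last typing rule in each case and then reassembling the derivation for $\subst{\P}{\x}{\val}$ with the same rule. The only preliminary fact needed is the analogous statement for expressions: if $\dere{\Gamma,\x{:}\ST}{\e'}{\ST'}$, $\dere{\Gamma}{\e}{\ST}$ and $\eval{\e}{\val}$, then $\dere{\Gamma}{\subst{\e'}{\x}{\val}}{\ST'}$. This rests only on the evaluated value $\val$ being typeable at sort $\ST$ in $\Gamma$ and on sorting of expressions being stable under substitution of a value of matching sort for a free variable, which is a routine property of the (unspecified) expression sublanguage. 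Throughout I use $\alpha$-conversion so that the bound variables of $\P$ (the $\x_i$ of input prefixes, the recursion variable $X$ of $\mu X.\Q$) are chosen distinct from $\x$ and not free in $\val$; this makes pushing the substitution under binders immediate, e.g. $\subst{(\mu X.\Q)}{\x}{\val}=\mu X.\subst{\Q}{\x}{\val}$, and keeps the extended environments below well formed.

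First I would dispatch the leaves. If $\P=\inact$, then by Lemma~\ref{l:inversion}(\ref{l:inversion7}) $\T=\End$, and $\subst{\inact}{\x}{\val}=\inact$ is typed by \rln{[t-$\inact$]}. If $\P=X$, then by Lemma~\ref{l:inversion}(\ref{l:inversion6}) $\Gamma,\x{:}\ST=\Gamma',X{:}\T$; since $X{:}\T$ concerns a process variable and $\x{:}\ST$ an expression variable, deleting $\x{:}\ST$ still leaves $X{:}\T$ in $\Gamma$, and $\subst{X}{\x}{\val}=X$ is typed by \rln{[t-Var]}.

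Then the inductive cases. For $\P=\intChoice{i}{I}{\pp}{\ell}{\e}{\P}$, Lemma~\ref{l:inversion}(\ref{l:inversion2}) gives $\T=\unMsg{i}{I}{\pp}{\ell}{S}{\T}$ with $\dere{\Gamma,\x{:}\ST}{\e_i}{S_i}$ and $\derP{\Gamma,\x{:}\ST}{\P_i}{\T_i}$ for $i\in I$; applying the expression fact to each $\e_i$, the induction hypothesis to each $\P_i$, and then \rln{[t-Out]} gives the result. The case $\P=\extChoice{i}{I}{\pp}{\ell}{x}{\P}$ is analogous via Lemma~\ref{l:inversion}(\ref{l:inversion1}) and \rln{[t-In]}: inversion yields $\derP{\Gamma,\x{:}\ST,\x_i{:}S_i}{\P_i}{\T_i}$, the induction hypothesis applied in the environment extended with $\x_i{:}S_i$ (legitimate since $\x_i\neq\x$) yields $\derP{\Gamma,\x_i{:}S_i}{\subst{\P_i}{\x}{\val}}{\T_i}$, and \rln{[t-In]} reassembles. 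The two checkpointed prefixes are treated identically, using Lemma~\ref{l:inversion}(\ref{l:inversion3}) resp. (\ref{l:inversion4}) and closing with rule \rln{[t-$\ck$]} on top of the reconstructed prefix. For $\P=\mu X.\Q$, Lemma~\ref{l:inversion}(\ref{l:inversion5}) gives $\derP{\Gamma,\x{:}\ST,X{:}\T}{\Q}{\T}$; the induction hypothesis (in the environment with $X{:}\T$) gives $\derP{\Gamma,X{:}\T}{\subst{\Q}{\x}{\val}}{\T}$, and \rln{[t-Rec]} concludes.

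There is no genuine obstacle: this is a textbook structural induction. The only points requiring care are the $\alpha$-renaming bookkeeping around binders — needed both so the substitution commutes with the prefix and recursion constructors and so the extended environments in the input and recursion cases are well formed — and the separation between expression variables and process variables in the $X$ case. The mild ``crux'' is simply isolating the expression-level substitution statement and noting that $\eval{\e}{\val}$ together with $\dere{\Gamma}{\e}{\ST}$ is exactly what licenses replacing $\x{:}\ST$ by $\val$ inside expressions.
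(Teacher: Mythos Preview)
Your proof is correct and is exactly the standard structural induction one would expect. The paper does not actually give a proof of this lemma: it merely introduces it as ``a standard substitution lemma'' and uses it later, so your detailed case analysis via the Inversion Lemma is precisely the argument the authors are implicitly invoking.
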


We can show subject reduction for well-typed multiparty sessions, which  implies subject reduction for well-typed networks. 
\begin{theorem}[SR]\label{th:subjectReduction}
If $\derN{\Nw}{\pairGT\GG\G}$ and 
$\Nw\stackredstar{~~\silent~~}\Nw'$, then $\derN{\Nw'}{\pairGT{\GG'}{\G'}}$ and  $\pairGT{\GG}{\G}\redGstar\pairGT{\GG'}{\G'}$.
\end{theorem}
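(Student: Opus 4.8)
The plan is to reduce the statement to a single reduction step and then argue by a case analysis. By induction on the number of steps of $\Nw\stackredstar{~~\silent~~}\Nw'$, it suffices to prove the one‑step version: if $\derN{\Nw}{\pairGT\GG\G}$ and $\Nw\stackred{~~\silent~~}\Nw'$, then $\derN{\Nw'}{\pairGT{\GG'}{\G'}}$ for some $\pairGT{\GG'}{\G'}$ with $\pairGT\GG\G\redGstar\pairGT{\GG'}{\G'}$; the base case is trivial and the inductive step composes by transitivity of $\redGstar$. Unwinding the congruence rules \rln{[PrM]} and \rln{[EqM]} and using invariance of typing under the structural equivalence $\equiv$ (which is exactly why the side condition of rule \rln{[t-M]} uses $\subseteq$ rather than $=$), every silent session transition becomes, up to $\equiv$, one of four kinds acting on a flat product $\Nw\equiv\Pi_{i\in I}\confEl{\pp_i}{\PrConf_i}$: \emph{(a)} an internal‑choice resolution \rln{[\chc]} in one component; \emph{(b)} a checkpointed internal choice \rln{[\ck\chc]} in one component; \emph{(c)} a communication \rln{[Com]} between two components (so \rln{[\snd]} on one side and \rln{[Rcv]} or \rln{[CkRcv]} on the other); \emph{(d)} a rollback \rln{[\rb M]}. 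In every case I first apply the Inversion Lemma~\ref{l:inversion} to expose the configuration types $\pairT{\RT_i}{\T_i}$ and the agreements $\agr{\pairT{\RT_i}{\T_i}}{{\pp_i}}{\pairGT\GG\G}$ of Definition~\ref{dar}, use the Substitution Lemma~\ref{s} for the receive steps, and exhibit $\pairGT{\GG'}{\G'}$ via zero or one step of the reduction of global types (Figure~\ref{redGlConf}).

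For case \emph{(a)} the moving participant $\pp_j$ has an uncheckpointed union active type, so by condition~\ref{dar4} and rule \rln{sub-ck} the active global type $\G$ is an uncheckpointed communication with sender $\pp_j$; the step turns the active type into a singleton union, still a subtype of $\proj\G{\pp_j}$ by \rln{sub-out}, so $\Nw'$ keeps the \emph{same} type $\pairGT\GG\G$ and $\pairGT\GG\G\redGstar\pairGT\GG\G$ reflexively. For case \emph{(b)}, $\G$ is a communication with sender $\pp_j$ checkpointed by some $A$, and the new global type is its \rln{[G-\ck\chc]}‑reduct $\pairGT{\Push\GG\G}{\G'}$, where $\G'$ is $\G$ with the checkpoint removed (the pushed single‑threaded type coincides with $\G$). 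Agreement of $\pp_j$ uses $\proj{\CkOne{\G'}{A}}{\pp_j}=\proj\G{\pp_j}$ at the new top of $\RT_j$ together with the singleton‑union subtyping for the new active type; every other participant $\pr$ now has a checkpointed sequence shorter by one than $\len{\Push\GG\G}$, hence lands in the \emph{second} disjunct of condition~\ref{dar5}, which holds because $\proj\G\pr=\CkOne{\proj{\G'}{\pr}}{A}$, so $\T_\pr\leq\proj\G\pr$ yields the required $\T'_\pr\leq\proj{\G'}{\pr}$. The length premise of \rln{[t-M]} is restored since only $\RT_j$ grows, and $\pt$ is unchanged.

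In case \emph{(c)} the sender $\pp_j$ has a singleton uncheckpointed union active type, which forces $\G$ to be an uncheckpointed communication $\pp_j\to\q$, and the transmitted label selects a branch index $k$; the receiver $\q$ agrees either via the first disjunct of~\ref{dar5} (when it does \rln{[Rcv]}) or, since $\G$ is uncheckpointed, necessarily via the second disjunct (when it does \rln{[CkRcv]}), in which case pushing the checkpointed input restores its sequence length. Matching of sorts lets Lemma~\ref{s} type the reduct of $\q$'s active process by the $k$‑th continuation, and the new global type is $\pairGT\GG{\G_k}$ by rule \rln{[G-Com]}. Agreement for $\pp_j$ and $\q$ is immediate, their new active types being the $k$‑th continuations of their old ones (subtypes of $\proj{\G_k}{\pp_j}$ and $\proj{\G_k}{\q}$ by \rln{sub-out}, \rln{sub-in}). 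The delicate point is the untouched participants $\pr\notin\{\pp_j,\q\}$: here I use the projection fact $\proj\G\pr\leq\proj{\G_k}{\pr}$ — true because $\proj\G\pr$ is, up to a common checkpoint, an intersection‑merge of the $\proj{\G_h}{\pr}$, and a merge of intersections is a subtype of each of its arguments by \rln{sub-in}/\rln{sub-ck} — which together with $\T_\pr\leq\proj\G\pr$ (or $\T'_\pr\leq\proj\G\pr$ in the second disjunct of~\ref{dar5}) and transitivity gives agreement with respect to $\G_k$. Lengths and $\pt$ are unchanged.

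Case \emph{(d)} is the crux. Writing $A$ for the rollback label, every rolling‑back participant $\pp_i$ has an $A$‑checkpointed entry in its checkpointed sequence, which by condition~\ref{dar1} sits over a position $p$ of $\GG$ with $\G_p$ checkpointed by $A$; since the single‑threaded types of $\GG$ carry pairwise distinct checkpoint names (Definition~\ref{sgtypesdef}), \emph{all} rolling‑back participants point to the \emph{same} $p$, while every non‑moving participant has $\inact$ active and $A\notin\A(\PrConf)$, so its checkpointed sequence is shorter than $p$. Taking $\GG'=\G_1\cdots\G_{p-1}$ and $\G'=\G_p$ gives exactly one \rln{[G-\rb]} step. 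For a moving $\pp_i$ the new configuration type has as checkpointed sequence the length‑$(p{-}1)$ prefix of $\RT_i$ and as active type the $p$‑th element of $\RT_i$, and agreement with $\pairGT{\GG'}{\G'}$ follows from condition~\ref{dar1} at position $p$, noting that $\G_p$ is checkpointed so only the first disjunct of~\ref{dar4}/\ref{dar5} can apply; for a non‑moving $\q_j$ the configuration type is unchanged with $\T_{\q_j}=\End$, and condition~\ref{dar3} transfers because its sequence is shorter than $p$ and the relevant projections of $\G_1,\dots,\G_{p-1}$ and of $\G_p$ onto $\q_j$ are $\End$ by the original~\ref{dar3}. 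The length premise holds because $\len{\GG'}=p-1$ equals the maximal length among the moving participants and the others are strictly shorter, and $\pt$ only shrinks. The main obstacle, recurring in \emph{(b)}--\emph{(d)}, is keeping track of the length mismatch that condition~\ref{dar5} permits — exactly one active process is an output while an active input process may lag its own checkpointed stack by one — and reconciling it with the shape of $\pairGT\GG\G$ at each step; in the rollback case this is compounded by the essential reliance on distinctness of checkpoint names in $\GG$ and on the typing invariant that uninvolved participants are already terminated.
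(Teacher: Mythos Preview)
Your argument tracks the paper's proof closely: the same reduction to a single step, the same four-way case split on the last rule applied (\rln{[\chc]}, \rln{[\ck\chc]}, \rln{[Com]} with its two receive sub-cases, \rln{[\rb M]}), the same appeals to Inversion and Substitution, and the same choice of $\pairGT{\GG'}{\G'}$ via the rules of Figure~\ref{redGlConf}; the rollback case in particular hinges, exactly as in the paper, on the distinctness of checkpoint names in $\GG$ to pin down a unique position.

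Two places where you are thinner than the paper. In case~(b) you assert that ``every other participant $\pr$'' lands in the second disjunct of condition~\ref{dar5}, but a participant whose active process is $\inact$ has type $\End$ and is governed by condition~\ref{dar3} instead (its checkpointed sequence may be strictly shorter than $\len\GG$, not just ``shorter by one''); the paper treats this sub-case separately, showing $\proj\G\pr=\proj{\G'}\pr=\End$. In case~(c), when the continuation $\G_k$ is itself checkpointed by some $B$, the paper argues that an untouched $\pr$ \emph{cannot} have been in the second disjunct of~\ref{dar5}: otherwise $\T'\subt\proj{\G_k}\pr$ would force $\T'=\CkOne{\T''}{B}$ and hence $\T_\pr=\CkOne{\CkOne{\T''}{B}}{A}$, which is not a session type. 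Your transitivity step is vacuously fine there, but without this observation you have not explained why the new second disjunct (which needs $\T'_\pr\leq\proj{\G_k}\pr$ with $\T'_\pr$ uncheckpointed and $\proj{\G_k}\pr$ checkpointed) can be discharged.
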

\begin{proof}
By induction on multiparty session reductions. It is easy to verify that typing is invariant under structural equivalence of multiparty sessions, so we will omit the application of rule \rln{[EqM]}.\\
 If $\Nw\stackred{~~\silent~~}\Nw'$, then there are three cases:
\begin{myenumerate}{}
\item\label{c1} $\Nw=\confEl{\pp}{\PrConf}\pc\Nw''$ and $\Nw'=\confEl{\pp}{\PrConf'}\pc\Nw''$ and $\confEl{\pp}{\PrConf}\stackred{\silent}\confEl{\pp}{\PrConf'}$, i.e., rule \rln{[PrM]} has been applied, 
\item\label{c2} $\Nw=\confEl{\pp}{\PrConf_\pp}\pc\confEl{\q}{\PrConf_\q}\pc\Nw''$ and $\Nw'=\confEl{\pp}{\PrConf_\pp'}\pc\confEl{\q}{\PrConf_\q'}\pc\Nw''$ and $\confEl{\pp}{\PrConf_\pp}\stackred{\q!\ell(\val)}\confEl{\pp}{\PrConf'_\pp}$ and\\ $\confEl{\q}{\PrConf_\q}\stackred{\pp?\ell(\val)}\confEl{\q}{\PrConf_\q'}$, i.e., rule \rln{[PrM]}, with rule \rln{[Com]} on the premise, has been applied, 
\item\label{c3} $\Nw=\Pi_{i\in I}{\confEl{\pp_i}{\PrConf_{\pp_i}}\pc\Pi_{j\in J}\confEl{\pp_j}{\PrConf_{\pp_j}}}$ and $\Nw'=\Pi_{i\in I}\confEl{\pp_i}{\PrConf_{\pp_i}'}\pc\Pi_{j\in J}\confEl{\pp_j}{\PrConf_{\pp_j}}$ and 
$\confEl{\pp_i}{\PrConf_{\pp_i}}\stackred{A}\confEl{\pp_i}{\PrConf'_{\pp_i}}$ for all $i\in I$ and $A\not\in\A(\PrConf_{\pp_j})$ for all $j\in J$, i.e., rule \rln{[\rb M]} has been applied.
\end{myenumerate}

\noindent
\underline{Case (\ref{c1})}.  From $\confEl{\pp}{\PrConf}\stackred{\silent}\confEl{\pp}{\PrConf'}$ we get ${\PrConf}\stackred{\silent}{\PrConf'}$. Let $\PrConf=\pairPr{\BP}\P$ and  $\PrConf'=\pairPr{\BP'}{\P'}$. Therefore \vspace{-2pt}
\begin{enumerate}[(a)]
\item\label{a} either $\PrConf\stackred{\silent}\PrConf'$ with rule \rln{[\chc]}, which implies $\P=\intChoice{i}{I}{\q}{\ell}{\e}{\P}$ and $\BP'=\BP$ and $\P'=\sendMsg{\q}{\ell_k}{\e_k}{\P_k}$ for $k\in I\not=\set k$,
\item\label{b} or $\PrConf\stackred{\silent}\PrConf'$ with rule \rln{[\ck\chc]}, which implies $\P=\CkOne{\intChoice{j}{J}{\q}{\ell}{\e}{\P}}{A}$ and $\BP'=\Push{\BP}{\P}$ and $\P'=\sendMsg{\q}{\ell_k}{\e_k}{\P_k}$ for $k\in J\neq\{k\}$.
\end{enumerate}\vspace{-2pt}
By Lemma~\ref{l:inversion}(\ref{l:inversion10}) $\derN{\confEl{\pp}{\PrConf}}{\pairT{\RT}{\T}}$ and $\agr{\pairT{\RT}{\T}}{{\pp}}{\pairGT{\GG}{\G}}$. 
Then $\derS{\BP}{\RT}$ and $\derP{}{\P}{\T}$ by Lemma~\ref{l:inversion}(\ref{l:inversion9}).  Lemma~\ref{l:inversion}(\ref{l:inversion2})  and (\ref{l:inversion4}) imply that $\T$ is a union type, then we get $\len\RT=\len\GG$ and $\T\subt\proj\G\pp$ by condition~\ref{dar4} of Definition~\ref{dar}. 

\noindent
\underline{Case (\ref{c1}\ref{a})}. Lemma~\ref{l:inversion}(\ref{l:inversion2}) applied to $\derP{}{\P}{\T}$ gives $\T=\unMsg{i}{I}{\q}{\ell}{\ST}{\T}$ and $\dere {}{\e_i}{\ST_i}$ and $\derP {}{\P_i}{\T_i}$ for $i\in I$. 
Then $\T\subt\proj\G\pp$ implies $\G=\GvtiLong{\pp}{\q}{\ell_i(\ST_i).\G_i}_{i\in I\cup L}$. We choose $\GG'=\GG$ and $\G'=\G$. In fact, we can derive $\derP {}{\P'}{\q!\ell_k(\ST_k).\T_k}$ and $\q!\ell_k(\ST_k).\T_k\subt\proj\G\pp$. Therefore, we derive $\derN{\Nw'}{\pairGT{\GG'}{\G'}}$ by rule \rln{[t-M]}.

\noindent
\underline{Case (\ref{c1}\ref{b})}. Lemma~\ref{l:inversion}(\ref{l:inversion4})  applied to $\derP{}{\P}{\T}$ gives $\T=\CkOne{\unMsg{j}{J}{\q}{\ell}{S}{\T}}A$ and $\dere {}{\e_j}{\ST_j}$ and $\derP {}{\P_j}{\T_j}$ for $j\in J$. Then $\T\subt\proj\G\pp$ implies $\G=\CkOne{\GvtiLong{\pp}{\q}{\ell_j(\ST_j).\G_j}_{j\in J\cup L}}A$. We can choose $\GG'=\Push\GG\G$ and $\G'=\GvtiLong{\pp}{\q}{\ell_j(\ST_j).\G_j}_{j\in J\cup L}$. In fact $\pairGT{\GG}{\G}\redG\pairGT{\Push\GG\G}{\G'}$ by rule \rln{[G-\ck\chc]} and we can derive $\derP {}{\P'}{\q!\ell_k(\ST_k).\T_k}$ and $\q!\ell_k(\ST_k).\T_k\subt\proj{\G'}\pp$.\\ If $\pr\not=\pp$ and $\confEl{\pr}{\pairPr{\BP_\pr}\P_\pr}$ occurs in $\Nw$, then by Lemma~\ref{l:inversion}(\ref{l:inversion10}) $\derN{\confEl{\pr}{\pairPr{\BP_\pr}\P_\pr}}{\pairT{\RT_\pr}{\T_\pr}}$ and $\agr{\pairT{\RT_\pr}{\T_\pr}}{{\pr}}{\pairGT{\GG}{\G}}$. Lemma~\ref{l:inversion}(\ref{l:inversion9}) gives $\derS{\BP_\pr}{\RT_\pr}$ and $\derP{}{\P_\pr}{\T_\pr}$. If $\P_\pr=\inact$, then $\T_\pr=\proj\G\pr=\proj{\G'}\pr=\End$ and \mbox{$\agr{\pairT{\RT_\pr}{\T_\pr}}{{\pr}}{\pairGT{\GG'}{\G'}}$,} since 
condition~\ref{dar3} of
Definition~\ref{dar} is satisfied. If $\P_\pr$ is an input process, then $\T_\pr$ is an intersection type and $\len{\RT_\pr}=\len{\GG}$ and $\T_\pr\subt\proj\G\pr$ by  the first alternative in condition~\ref{dar5} of Definition~\ref{dar}. We have $\len{\RT_\pr}=\len{\GG'}-1$ since $\len{\RT_\pr}=\len{\GG}$. From  $\T_\pr\subt\proj\G\pr$ we get $\T_\pr= \CkOne{\T_\pr'}A$ and $\T_\pr'\subt\proj{\G'}\pr$. Then $\agr{\pairT{\RT_\pr}{\T_\pr}}{{\pr}}{\pairGT{\GG'}{\G'}}$ since the second alternative of condition~\ref{dar5} in Definition~\ref{dar} is satisfied. We can then derive $\derN{\Nw'}{\pairGT{\GG'}{\G'}}$ by rule \rln{[t-M]}.

\noindent
\underline{Case (\ref{c2})}. From $\confEl{\pp}{\PrConf_\pp}\stackred{\q!\ell(\val)}\confEl{\pp}{\PrConf'_\pp}$ and $\confEl{\q}{\PrConf_\q}\stackred{\pp?\ell(\val)}\confEl{\q}{\PrConf_\q'}$ we get that $\PrConf_\pp\stackred{\q!\ell(\val)}\PrConf'_\pp$ and $\PrConf_\q\stackred{\pp?\ell(\val)}\PrConf_\q'$. Let $\PrConf_\pp=\pairPr{\BP_\pp}{\P_\pp}$ and  $\PrConf'_\pp=\pairPr{\BP'_\pp}{\P'_\pp}$ and $\PrConf_\q=\pairPr{\BP_\q}\P_\q$ and  $\PrConf'_\q=\pairPr{\BP'_\q}{\P'_\q}$. Then $\PrConf_\pp$ reduces with rule \rln{[\snd]}, which implies $\P_\pp=\sendMsg{\q}{\ell}{\e}{\P}$ and $\eval{\e}{\val}$ and $\BP'_\pp=\BP_\pp$ and $\P'_\pp=\P$. The reduction of $\PrConf_\q$ can be done \vspace{-10pt}
\begin{enumerate}[(a)]
\item\label{2a} either with rule  \rln{[Rcv]}, which implies $\P_\q=\extChoice{i}{I}{\pp}{\ell}{x}{\P}$ with $\ell_k=\ell$ and $\BP'_\q=\BP_\q$ and $\P'_\q=\subst{\P_k}{x}{\val}$, 
\item\label{2b} or with rule \rln{[CkRcv]}, which implies $\P_\q=\CkOne{\extChoice{j}{J}{\pp}{\ell}{x}{\P}}{A}$ with $\ell_k=\ell$ and $\BP'_\q=\Push{\BP_\q}{\P_\q}$ and $\P'_\q=\subst{\P_k}{x}{\val}$.
\end{enumerate}\vspace{-3pt}
By Lemma~\ref{l:inversion}(\ref{l:inversion10}) $\derN{\confEl{\pp}{\PrConf_\pp}}{\pairT{\RT_\pp}{\T_\pp}}$ and $\agr{\pairT{\RT_\pp}{\T_\pp}}{{\pp}}{\pairGT{\GG}{\G}}$ and $\derN{\confEl{\q}{\PrConf_\q}}{\pairT{\RT_\q}{\T_\pp}}$ and $\agr{\pairT{\RT_\q}{\T_\q}}{{\q}}{\pairGT{\GG}{\G}}$. 
Then $\derS{\BP_\pp}{\RT_\pp}$ and $\derP{}{\P_\pp}{\T_\pp}$ and $\derS{\BP_\q}{\RT_\q}$ and $\derP{}{\P_\q}{\T_\q}$ by Lemma~\ref{l:inversion}(\ref{l:inversion9}).  Lemma~\ref{l:inversion}(\ref{l:inversion2}) applied to $\derP{}{\P_\pp}{\T_\pp}$ gives $\T_\pp=\tout{\q}{\ell}{\ST}.{\T}$ and $\dere {}{\e}{\ST}$ and $\derP {}{\P}{\T}$. We have $\len{\RT_\pp}=\len\GG$ and $\T_\pp\subt\proj\G\pp$ by condition~\ref{dar4} of Definition~\ref{dar}. This implies $\G=\Gvth\pp \ell \ST {\G}$ with $\ell_k=\ell$, $\ST=\ST_k$ and $\T\subt\proj{\G_k}\pp$. We derive $\derN{\confEl{\pp}{\PrConf_\pp'}}{\pairT{\RT_\pp}{\T}}$. 

\noindent
We can choose $\GG'=\GG$ and $\G'=\G_k$ since $\pairGT{\GG}{\G}\redG\pairGT{\GG}{\G_k}$ by rule \rln{[G-Com]} and we will show that $\derN{\Nw'}{\pairGT{\GG'}{\G'}}$ is derivable by checking the agreement conditions of Definition~\ref{dar} for all pairs participant/configuration of $\Nw'$. From $\agr{\pairT{\RT_\pp}{\T_\pp}}{{\pp}}{\pairGT{\GG}{\G}}$ and $\T\subt\proj{\G_k}\pp$ we get $\agr{\pairT{\RT_\pp}{\T}}{{\pp}}{\pairGT{\GG'}{\G'}}$.

\noindent
\underline{Case (\ref{c2}\ref{2a})}. Lemma~\ref{l:inversion}(\ref{l:inversion1})  applied to $\derP{}{\P_\q}{\T_\q}$ gives $\T_\q=\intMsg{i}{I}{\q}{\ell}{S}{\T}$ and  $\derP {x_i:\ST_i}{\P_i}{\T_i}$ for $i\in I$. We get $\len{\RT_\pp}=\len\GG$ and $\T_\q\subt\proj\G\q$ by condition~\ref{dar5} of Definition~\ref{dar}. This implies $H\subseteq I$ and in particular $k\in I$. The Substitution Lemma implies $\derP{}{\P'_\q}{\T_k}$. We derive $\derN{\confEl{\q}{\PrConf_\q'}}{\pairT{\RT_\q}{\T_k}}$. From $\agr{\pairT{\RT_\q}{\T_\q}}{{\q}}{\pairGT{\GG}{\G}}$ and $\T_k\subt\proj{\G_k}\q$ we get $\agr{\pairT{\RT_\q}{\T_k}}{{\q}}{\pairGT{\GG'}{\G'}}$.

\noindent
\underline{Case (\ref{c2}\ref{2b})}. Lemma~\ref{l:inversion}(\ref{l:inversion3})  applied to $\derP{}{\P_\q}{\T_\q}$ gives $\T_\q=\CkOne{\intMsg{j}{J}{\q}{\ell}{S}{\T}}A$ and  $\derP {x_j:\ST_j}{\P_j}{\T_j}$ for $j\in J$. Let $\GG=\Push{\GG''}{\G''}$. We get $\len{\RT_\pp}=\len\GG-1$ and $\T_\q\subt\proj{\G''}\q$ and $\intMsg{j}{J}{\q}{\ell}{S}{\T}\subt\proj\G\q$ by condition~\ref{dar5} of Definition~\ref{dar}. This implies $H\subseteq J$ and in particular $k\in J$. As in previous case the Substitution Lemma implies $\derP{}{\P'_\q}{\T_k}$. We derive $\derN{\confEl{\q}{\PrConf_\q'}}{\pairT{\Push{\RT_\q}{\T_\q}}{\T_k}}$. From $\agr{\pairT{\RT_\q}{\T_\q}}{{\q}}{\pairGT{\Push{\GG''}{\G''}}{\G}}$  and $\T_\q\subt\proj{\G''}\q$ and $\T_k\subt\proj{\G_k}\q$ we get $\agr{\pairT{\Push{\RT_\q}{\T_\q}}{\T_k}}{{\q}}{\pairGT{\GG'}{\G'}}$.

\noindent
Consider a participant $\pr\not=\pp,\q$. If $\confEl{\pr}{\pairPr{\BP_\pr}\P_\pr}$ occurs in $\Nw$, then by Lemma~\ref{l:inversion}(\ref{l:inversion10}) $\derN{\confEl{\pr}{\pairPr{\BP_\pr}\P_\pr}}{\pairT{\RT_\pr}{\T_\pr}}$ and $\agr{\pairT{\RT_\pr}{\T_\pr}}{{\pr}}{\pairGT{\GG}{\G}}$. Lemma~\ref{l:inversion}(\ref{l:inversion9}) gives $\derS{\BP_\pr}{\RT_\pr}$ and $\derP{}{\P_\pr}{\T_\pr}$. If $\P_\pr=\inact$, then $\T_\pr=\proj\G\pr=\proj{\G'}\pr=\End$ and $\agr{\pairT{\RT_\pr}{\T_\pr}}{{\pr}}{\pairGT{\GG'}{\G'}}$, since condition~\ref{dar3} of Definition~\ref{dar} is satisfied. If $\P_\pr$ is an input process, then $\T_\pr$ is an intersection type and either $\len{\RT_\pr}=\len{\GG}$ and $\T_\pr\subt\proj\G\pr$ or $\len{\RT_\pr}=\len{\GG}-1$ and $\T_\pr\subt\proj{\G''}\pr$ and $\T_\pr=\CkOne{\T'}A$ and $\T'\subt\proj\G\pr$ by condition~\ref{dar5} of Definition~\ref{dar}. In both cases $\proj\G\pr\subt\proj{\G_k}\pr$. If  $\G_k$ is uncheckpointed we conclude $\agr{\pairT{\RT_\pr}{\T_\pr}}{{\pr}}{\pairGT{\GG'}{\G'}}$. 
If $\G_k$ is checkpointed we must have $\len{\RT_\pr}=\len{\GG}$ and $\T_\pr\subt\proj\G\pr$. In fact otherwise $\T'\subt\proj{\G_k}\pr$ would imply $\T'=\CkOne{\T''}B$, where $B$ is the name of the checkpoint of $\G_k$. We would get $\T_\pr=\CkOne{\CkOne{\T''}B}A$ and this is not a session type according to our syntax.

\noindent
\underline{Case (\ref{c3})}. Let $\GG=\Push{\Push{\GG'}{\G'}}{\GG''}$ and $A$ be the name of the checkpoint of $\G'$. Lemma~\ref{l:inversion}(\ref{l:inversion10}) implies that \mbox{$\derN{\confEl{\pp_l}{\PrConf_{\pp_l}}}{\pairT{\RT_{\pp_l}}{\T_{\pp_l}}}$} and $\agr{\pairT{\RT_{\pp_l}}{\T_{\pp_l}}}{{\pp_l}}{\pairGT{\GG}{\G}}$ for all $l\in I\cup J$.  If $\len{\RT_{\pp_l}}\leq\len{\GG'}$, then
$A\not\in\AA(\PrConf_{\pp_l})$ by Definition~\ref{sgtypesdef}(\ref{sgtypesdef2}) and Definition~\ref{dar}. This implies $l\in J$ and $\agr{\pairT{\RT_{\pp_l}}{\T_{\pp_l}}}{{\pp_l}}{\pairGT{\GG'}{\G'}}$. Otherwise $l\in I$ and $\RT_{\pp_l}=\Push{\Push{\RT'_{\pp_l}}{\T_l}}{\RT''_{\pp_l}}$ and $\T_l\subt\proj{\G'}{\pp_l}$ by condition~\ref{dar1} of Definition~\ref{dar}. Let $\PrConf_{\pp_i}=\pairPr{\BP_{\pp_i}}{\P_{\pp_i}}$. From $\derN{\confEl{\pp_i}{\PrConf_{\pp_i}}}{\pairT{\RT_{\pp_i}}{\T_{\pp_i}}}$ we get $\derN{\BP_{\pp_i}}{\Push{\Push{\RT'_{\pp_i}}{\T_i}}{\RT''_{\pp_i}}}$ by Lemma~\ref{l:inversion}(\ref{l:inversion9}). Then $\BP_{\pp_i}=\Push{\Push{\BP'_{\pp_i}}{\P_i}}{\BP''_{\pp_i}}$ and $\derN{\P_i}{\T_i}$ by Lemma~\ref{l:inversion}(\ref{l:inversion8}). The name of the checkpoint of $\P_i$ is $A$ since $\T_i\subt\proj{\G'}{\pp_i}$ and then $\PrConf_{\pp_l}'=\pairPr{\BP'_{\pp_i}}{\P_i}$. This implies $\agr{\pairT{\RT_{\pp_i}}{\T_{\pp_i}}}{{\pp_i}}{\pairGT{\GG'}{\G'}}$. We can then derive $\derN{\Nw'}{\pairGT{\GG'}{\G'}}$ by rule \rln{[t-M]}.

\end{proof}

From the proof of the previous theorem we get the following properties of well-typed networks, which are usual for session calculi~\cite{CHY08}. We say that an application of the reduction rule \rln{[Com]} has a {\em type mismatch}, if there is no sort that can be derived both for the communicated value and for variable associated to the communicated label in the input process. 

Global types describe interaction protocols. The communications in well-typed networks evolve following the exact order of the associated global types. 

\begin{theorem}[Session Fidelity]\label{sf}
If $\derNok{\Nt}$, then reducing $\Nt$
\begin{myenumerate}{}
\item\label{sf1} there is never a type mismatch;
\item\label{sf2} the communications occur in the order prescribed by global types. 
\end{myenumerate}
\end{theorem}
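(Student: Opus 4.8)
The plan is to obtain Session Fidelity as a consequence of Subject Reduction (Theorem~\ref{th:subjectReduction}), the Inversion Lemma (Lemma~\ref{l:inversion}), and the structural facts about well-typed multiparty sessions that are already established inside the proof of Theorem~\ref{th:subjectReduction}. First I would reduce the statement for networks to the statement for a single multiparty session: by Lemma~\ref{l:inversion}(\ref{l:inversion11}), $\derNok\Nt$ means that $\Nt$ is a parallel composition $\Nw_1\PC\cdots\PC\Nw_n$ of multiparty sessions, each of which is well typed by some global type; since rules \rln{[PrN]} and \rln{[EqN]} let the component sessions of a network reduce independently (and sharing a participant name across sessions does not create any interaction), it suffices to prove that both claims hold along every reduction sequence issuing from a well-typed session $\Nw$ with $\derN\Nw{\pairGT\GG\G}$.

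I would then argue by induction on the length of a reduction sequence $\Nw\stackredstar{\silent}\Nw'$, using Theorem~\ref{th:subjectReduction} to keep as an invariant that $\derN{\Nw'}{\pairGT{\GG'}{\G'}}$ for some $\pairGT{\GG'}{\G'}$ with $\pairGT\GG\G\redGstar\pairGT{\GG'}{\G'}$. For the inductive step I would split, exactly as in the proof of Theorem~\ref{th:subjectReduction}, into the three cases: a local silent reduction (rule \rln{[\chc]} or \rln{[\ck\chc]}), a communication (rule \rln{[Com]}), and a rollback (rule \rln{[\rb M]}). Only rule \rln{[Com]} performs a communication, and here I would invoke precisely the bookkeeping of Case~(\ref{c2}) of that proof: if the step is $\pp$ sending $\ell$ to $\q$, then the active single-threaded global type must be, after at most one \rln{[G-\ck\chc]} step when the output process is checkpointed, of the form $\Gvth\pp\ell\ST\G$ with a branch satisfying $\ell_k=\ell$ and $\ST=\ST_k$; the sent expression $\e$ satisfies $\dere{}\e\ST$, so by $\eval\e\val$ the transmitted value $\val$ has sort $\ST$; and the receiver's type, which is a (possibly checkpointed) intersection subtype of $\proj{\G'}{\q}$, has by Lemma~\ref{l:inversion}(\ref{l:inversion1}),(\ref{l:inversion3}) a branch $\ell_k$ binding a variable of sort $S_k=\ST$. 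Hence the communicated value and the receiving variable have the same sort, i.e., there is no type mismatch, which proves~(\ref{sf1}).

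For~(\ref{sf2}) the same analysis shows that the communication actually performed, $\pp\to\q\colon\ell_k$, is exactly one of the communications offered by the active global type $\G'$, and that the accompanying reduction of the global type is the matching \rln{[G-Com]} step to $\pairGT{\GG'}{\G_k}$; the silent choice steps and the rollbacks are mirrored by \rln{[G-\ck\chc]} and \rln{[G-\rb]}, which restructure $\pairGT\GG\G$ without performing any communication. Since the invariant guarantees that every session reached is well typed by a global type reachable from the original one along $\redGstar$, the sequence of communications performed by $\Nw$ is a trace of the evolving global type; in particular, after a rollback the communications continue in the order prescribed by the rolled-back global type.

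There is no deep obstacle here, as the argument essentially re-packages the reasoning already carried out in the proof of Theorem~\ref{th:subjectReduction}. The points that need a little care are: stating clearly that ``the order prescribed by global types'' is to be read with respect to the global type as it evolves under $\redG$, so that rollbacks remain consistent (via \rln{[G-\rb]}); and observing that a communication can never run ahead of the global type, because rule \rln{[Com]} forces the sender's active type to be a union subtype of $\proj{\G'}{\pp}$, so the active global type always commands precisely the communication that occurs --- and, in particular, a checkpointed choice must first be resolved by a \rln{[G-\ck\chc]} step before the corresponding communication becomes enabled.
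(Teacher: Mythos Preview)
Your proposal is correct and follows the same approach as the paper, which simply states that the result is extracted ``from the proof of the previous theorem'' (Theorem~\ref{th:subjectReduction}); you have spelled out in detail exactly that extraction. One small remark: your parenthetical ``after at most one \rln{[G-\ck\chc]} step when the output process is checkpointed'' is unnecessary, since by the time rule \rln{[Com]} fires the sender's active process already has the uncheckpointed form $\sendMsg{\q}{\ell}{\e}{\P}$ (rule \rln{[\snd]} does not apply to checkpointed outputs), so the active global type is already uncheckpointed at that point.
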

Notably property~\ref{sf1} holds in spite of the fact that session participants may exchange messages of  different types. Property~\ref{sf2} says that session participants behave according to established communication protocols.

The standard definition of progress only ensures absence of deadlocks~\cite[Section 8.3]{pier02}. Progress for session calculi means that all the requested interactions may happen~\cite{CDPY16}. In reversible sessions it is also natural to guarantee that all possible rollbacks may take place. This leads us to the following formulation of the progress theorem.

\begin{theorem}[Progress]\label{th:progress}
If $\derNok{\Nt}$, then:
\begin{myenumerate}{}
\item\label{th:progress1} if $\Nt$ contains an input or output process, then $\Nt$ forward reduces to $\Nt'$ and that input or output prefix does not occur in $\Nt'$;
\item\label{th:progress2} if $\Nt$ contains a checkpoint named $A$, then there is a reduction of $\Nt$ in which the last step is a rollback making the processes checkpointed by $A$ active processes. 
\end{myenumerate}
\end{theorem}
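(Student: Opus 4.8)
The plan is to reduce both claims to a single session and then drive it with a one-step progress lemma. By Lemma~\ref{l:inversion}(\ref{l:inversion11}) a network with $\derNok{\Nt}$ is a parallel composition of sessions, each typed by some $\pairGT{\GG}{\G}$, and since the sessions of a network reduce independently it suffices to establish both statements for a single session $\Nw$ with $\derN{\Nw}{\pairGT{\GG}{\G}}$; by the inversion lemma every checkpoint name and every input/output prefix occurring in $\Nw$ is recorded in the session types of its configurations, hence, through Definition~\ref{dar} and the shape of the projections, in $\pairGT{\GG}{\G}$. The workhorse is a \emph{one-step progress lemma}: if $\G$ is not $\End$ (after unfolding recursion), then $\Nw$ has a forward reduction $\Nw\stackred{\silent}\Nw'$ which, as in the proof of Theorem~\ref{th:subjectReduction}, either leaves the global type unchanged (when it only resolves an uncheckpointed internal choice) or takes $\pairGT{\GG}{\G}$ to $\pairGT{\GG'}{\G'}$ by rule \rln{[G-\ck\chc]} or \rln{[G-Com]}. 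This follows by cases on $\G$ from the facts recorded after Lemma~\ref{l:inversion}: exactly one active process, that of the sender of $\G$, is an output (possibly checkpointed), and its partner has an input (possibly checkpointed) as active process, so the sender fires \rln{[\chc]}/\rln{[\ck\chc]} then \rln{[\snd]}, the partner answers with \rln{[Rcv]}/\rln{[CkRcv]}, and together they make a \rln{[Com]}. Crucially the sender may emit any label it offers, so we can steer the computation along any branch the senders are prepared to take.

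For Part~\ref{th:progress1}, fix an input or output prefix $\pi$ in an active process of $\Nw$ and argue by induction on the pair $\langle n,d\rangle$, where $n$ counts the single-threaded communications the active global type must still perform before the participant carrying $\pi$ acts on it, and $d$ is the depth of $\pi$ in that participant's active process; both are finite since recursion is guarded and $\G$ is a regular tree, so $\pi$ sits at finite depth and is preceded by finitely many communications along any path keeping it alive. By the one-step progress lemma each forward step either consumes $\pi$ (when $\pi$ heads the output or the input just reduced), or discards it (when $\pi$ lies in a branch of a choice fired with a different label, so $\pi$ disappears from the session), or leaves $\pi$ alone while strictly decreasing $\langle n,d\rangle$, because a communication advances the active global type by one $\redG$ step and both communicators' active processes by one prefix. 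Hence after finitely many forward steps $\pi$ no longer occurs, as required.

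For Part~\ref{th:progress2}, let $A$ occur as a checkpoint in $\Nw$; by the inversion lemma and Definition~\ref{dar} it labels a (possibly nested) single-threaded communication inside $\G$ or inside some memorised component of $\GG$. If that component is not the currently active type, I would first --- after a forward phase making it enabled --- perform a rollback bringing it (back) to the active position. From a state in which the $A$-labelled communication lies in the active type, I would forward-reduce, steering branch choices (legitimate by the one-step progress lemma) so as to keep that communication reachable, until it is itself the active type, say ${}_A\pp\to\q{:}\{\ell_j(\ST_j).\G_j\}_{j\in J}$. The sender $\pp$ then crosses $A$ by rule \rln{[\ck\chc]} --- mirrored by \rln{[G-\ck\chc]}, which pushes the $A$-checkpointed single-threaded type on top of $\GG$ --- and a \rln{[Com]} makes $\q$ cross $A$ by \rln{[CkRcv]}. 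I would keep going, invoking Part~\ref{th:progress1} for the post-$A$ phase, until every participant of that communication has performed its first post-$A$ action and hence carries the $A$-checkpointed process in its checkpointed sequence at the position matching the (unique, by Definition~\ref{sgtypesdef}(\ref{sgtypesdef2}) and Theorem~\ref{th:subjectReduction}) occurrence of $A$ in $\GG$; here one uses that the projection of an $A$-checkpointed communication onto a participant not directly involved is an \emph{uncheckpointed} intersection, so that participant crosses $A$ exactly once, by a \rln{[CkRcv]}. In the resulting session every involved participant can fire \rln{[\rb P]} with label $A$ while every other participant has $\inact$ as active process with $A$ absent from the set $\A$ of its configuration --- precisely the premise of rule \rln{[\rb M]}, as in case~(\ref{c3}) of the proof of Theorem~\ref{th:subjectReduction} --- so one final \rln{[\rb M]} step is the rollback required, making the processes checkpointed by $A$ active.

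The step I expect to be the real obstacle is the core of Part~\ref{th:progress2}: scheduling the forward reduction so that the $A$-labelled communication is actually reached and, once crossed, every participant involved in it crosses $A$ before the rollback is performed. This calls for a simultaneous induction tracking, per participant, whether it has already crossed $A$ and how far its active process still is from its first post-$A$ action, together with the structural facts from Definition~\ref{dar} and the projection of checkpointed types (a participant not communicating at $A$ receives an uncheckpointed message in every branch, forcing a single \rln{[CkRcv]}), and it must respect which labels the senders actually offer. The remaining bookkeeping --- in Part~\ref{th:progress1} the participants momentarily idle because the current communication does not involve them, and the unfolding of recursive active processes, and in both parts the invariance of typing under structural equivalence --- is routine but is where most of the write-up length will go.
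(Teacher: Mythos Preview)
Your approach is essentially the paper's, though considerably more detailed. For Part~\ref{th:progress1} the paper dispatches the claim in one sentence by invoking the standard progress result for well-typed multiparty sessions from~\cite{CDPY16}, together with Subject Reduction and the participant-completeness premise of rule \rln{[t-M]}; your one-step progress lemma and $\langle n,d\rangle$ induction unpack that citation explicitly. For Part~\ref{th:progress2} the paper performs the same two-case split you do --- $A$ in the active $\G$ versus $A$ inside some $\CkOne{\G'}{B}$ occurring in $\GG$ --- and in the latter case first rolls back to $B$ to bring the $A$-bearing single-threaded type into active position, then reduces to the first case. The paper is terser exactly where you flag the real obstacle: it simply asserts that one ``can reduce forward $\Nw$ until the processes checkpointed by $A$ will be all in the checkpointed sequences'' and that the rollback to $B$ is immediately applicable, without discussing how the senders' offered labels are steered toward $A$ or how the side conditions of \rln{[\rb M]} are secured; your extra bookkeeping there is warranted, not redundant. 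One small slip: the projection of an $A$-checkpointed communication onto a participant $\pr\neq\pp,\q$ is not an \emph{un}checkpointed intersection but $\CkOne{\interRcv(\ldots)}{A}$ (see Figure~\ref{projection}), so that participant does cross $A$ by \rln{[CkRcv]} as you need --- your conclusion stands, only the description is off.
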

\begin{proof} As proved in \cite{CDPY16}, a single multiparty session
  in a standard calculus with global and session types, like the
  calculus in~\cite{CHY08}, always enjoys progress whenever it is well typed. 
  In fact, by the Subject Reduction Theorem (Theorem~\ref{th:subjectReduction}), reduction preserves
  well-typedness of sessions. Moreover, all required session participants are
  present, as ensured by the condition $\pt(\GG)\cup\pt(\G)\subseteq\set{\pp_i\mid i\in I}$ in the premise of rule \rln{[t-M]}. Thus, all communications among participants in a unique
  session will take place, in the order prescribed by the single-threaded active global type. This ensures that property~\ref{th:progress1} holds. For property~\ref{th:progress2} observe that, 
  if $\Nt$ contains checkpoints named $A$, then there is at least one multiparty session $\Nw$ in $\Nt$ which is typed by a global type $\pairGT\GG\G$ which contains $A$. If $A$ occurs in $\G$, then we can reduce forward $\Nw$ until the processes checkpointed by $A$ will be all in the checkpointed sequences, and then apply the desired rollback. If $A$ does not occur in $\G$ let $\GG=\Push{\Push{\GG'}{\CkOne{\G'}B}}{\GG''}$ and $A$ occurs in $\CkOne{\G'}B$.  Then the checkpointed sequences of $\Nw$ have processes checkpointed by $B$. We can then apply the rollback which makes the processes checkpointed by $B$ to become active processes. If $A=B$ we are done. Otherwise the active  global type of the obtained session contains $A$ and we can conclude as in previous case. 
\end{proof}


\mysection{Related Work and Conclusions}\label{rwc}

Since the pioneering work by Danos and Krivine~\cite{DK04}, reversible computations in process algebras have been widely studied. The calculus of~\cite{DK04} adds a distributed monitoring system to CCS~\cite{M89} allowing computations to be rewound. 
Phillips and Ulidowski~\cite{PU07} propose a method for reversing process operators that are definable by SOS rules in a general format, using keys to bind synchronised actions together.
A reversible variant of the higher-order $\pi$-calculus is defined in~\cite{LMS10}, using name tags for identifying threads and explicit memory processes. In~\cite{LMSS11}, Lanese et {al.} enrich the calculus of~\cite{LMS10} with a fine-grained rollback primitive. 
To the best of our knowledge the earliest work dealing with rollback of communicating systems
are~\cite{VriesKH10,VKH10b,KSH14}. In these papers an extension of CCS
models the combination of rollback recovery and coordinated
checkpoints.  

As pointed out in~\cite{PU07}, reversibility in process calculi is challenging, since  we cannot distinguish between the processes $a\| a$ and $a.a$ by simply recording the past actions. For this reason both histories and unique identifiers for threads have been used to track information. We do not have this problem in our calculus since each session participant reduces in a sequential way. A key requirement, dubbed {\em causal consistency} in ~\cite{DK04}, is that of undoing actions only if no other action depending on them has been executed (and not undone). In the present work causal consistency follows from the linearity of the interactions described by single-threaded global types. 

The most widely used models of structured communication-based programming are session behaviours \cite{BdL13,BH13} and session calculi~\cite{HVK98,CHY08}. Reversibility has been incorporated into both these models. 

Compliance and sub-behaviour for session behaviours with checkpoints has been first studied in~\cite{BDL16}. There a process has the possibility, after a rollback, of resuming the computation along the very 
same branch of the computation on which the rollback has been performed. From a different point of view, instead, rollbacks could be used as a strategy to get compliance.
For instance assuming the interacting processes to roll back whenever the current branch of
the computation cannot proceed and a different branch could work instead.
This approach has been investigated in~\cite{BDLdL15}.

The papers closer to ours are~\cite{TY15,TY16,MP16}. Tiezzi and Yoshida~\cite{TY15} use tags and memories to allow reversibility of binary sessions with delegation. Reversibility is full, i.e. each interaction can be undone and causal consistency is preserved. An extension of this calculus allows computation to go forward and backward until the session is committed by means of a specific irreversible action. Only processes are typed, but this is enough to ensure absence of errors. Two forms of reversibility are considered in~\cite{TY16}. Either a session can be completely reversed with one backward step, or any intermediate state can be restored with either one backward step or multiple ones.  In the first case the memory is just the initial process, while in the second case the sequence of all the processes generated by the reduction is needed. Both binary and multiparty sessions are taken into account under the hypothesis that they are ``single''. A session is single when all participants interact only along that session. Mezzina and P\'erez~\cite{MP16} use monitors as memories. A key novelty of~\cite{MP16} are session types with present and past, which allow the semantics of reversible actions to be streamlined.



The main contributions of this paper are the treatment of checkpointed interactions and the role played by global types in controlling reversibility.
In defining of our calculus, we made some design choices. For simplicity, we did not consider 
\begin{myitemize}
\item session initialisation by means of request/accept,
\item subsorting and covariance/contravariance of messages types in the subtyping of session types,
\item asynchronous communications using message queues.
\end{myitemize}
Including these features, which are present in~\cite{CHY08,GH05,TY16}, would be easy.\\
Moreover we did the following assumptions:
\begin{myitemize}
\item the rollback to a checkpoint is done non-deterministically and simultaneously by all participants which traversed that checkpoint,
\item all the communications can be undone. 
\end{myitemize}
In future work we plan to address the issue of communication that cannot be undone, such as ``money dispensed by an ATM machine'', and also add to the process language primitives triggering the rollback. 
In our calculus, when crossing a checkpoint we memorise all the branches of the choice. Including 
only the branches not taken, would get us also checkpointed single inputs/outputs, 
and this would require some care.

We will also study rollbacks with checkpoints for interleaved multiparty sessions with delegation. In this case, a crucial point is the dependency between different sessions when backward reductions are done, see~\cite{TY15}. 


\vspace{-15pt}
\paragraph{Acknowledgments.}
We are grateful to the anonymous reviewers for their useful
suggestions, which led to substantial improvements. 
\vspace{-15pt}



\end{document}